\documentclass{article}
\usepackage{graphicx} % Required for inserting images
\usepackage{a4wide}
\usepackage{amsmath}
\usepackage{amsthm}
\usepackage{amsfonts}

\usepackage{amsmath,amsthm,amsfonts,amssymb}
\usepackage{thm-restate}
\usepackage{fullpage}
\usepackage[utf8]{inputenc}
\usepackage[dvipsnames]{xcolor}
\usepackage{xspace}
\usepackage[shortlabels]{enumitem}
\usepackage[hypertexnames=false,colorlinks=true,urlcolor=Blue,citecolor=Green,linkcolor=BrickRed]{hyperref}
\usepackage[capitalise]{cleveref}
\usepackage[OT4]{fontenc}
\usepackage[ruled]{algorithm}
\usepackage{ifpdf}
\usepackage{todonotes}
\usepackage{thmtools}
\usepackage{mathtools}
\usepackage[noend]{algpseudocode}
\usepackage{textpos}
\usepackage{makecell}
\usepackage{multirow}

\usepackage[margin=1in]{geometry}

\usepackage{crossreftools}
\theoremstyle{plain}
\newtheorem{theorem}{Theorem}[section]
\newtheorem{lemma}[theorem]{Lemma}

\newtheorem{fact}[theorem]{Fact}
\newtheorem{observation}[theorem]{Observation}

\algnewcommand{\IIf}[1]{\State\algorithmicif\ #1\ \algorithmicthen}
\algnewcommand{\EndIIf}{\unskip\ \algorithmicend\ \algorithmicif}

\newcommand{\dist}{\mathsf{dist}}
\newcommand{\Oh}{\mathcal{O}}
\newcommand{\N}{\mathbb{N}}
\newcommand{\tw}{\mathrm{tw}}

\newcommand{\Aa}{\mathcal{A}}
\newcommand{\Bb}{\mathcal{B}}
\newcommand{\Tt}{\mathcal{T}}
\newcommand{\Ll}{\mathcal{L}}
\newcommand{\mx}{\mathrm{max}}
\newcommand{\mn}{\mathrm{min}}
\newcommand{\bag}{\mathsf{bag}}

\newcommand{\bnd}{\partial}
\newcommand{\edges}{\mathsf{edges}}

\renewcommand{\leq}{\leqslant}
\renewcommand{\geq}{\geqslant}
\renewcommand{\le}{\leqslant}

\newcommand{\aff}[1]{\small\textcolor{black!60}{#1}}

\newcommand{\tsc}[1]{\textsc{\text{#1}}}

\def\DEBUG{true}
\ifdefined\DEBUG{}

\newcommand{\ms}[1]{\textcolor{violet}{\textbf{Marek: } #1}}
\def\rem#1{{\marginpar{\raggedright\scriptsize #1}}}
\newcommand{\mwr}[1]{\rem{\textcolor{blue}{MW: #1}}}

\newcommand{\msr}[1]{\rem{\textcolor{violet}{MS: #1}}}
\newcommand{\micr}[1]{\rem{\textcolor{pink}{MP: #1}}}
\else

\newcommand{\ms}[1]{ }
\newcommand{\mwr}[1]{ }
\newcommand{\msr}[1]{ }
\newcommand{\micr}[1]{ }
\fi

\title{\Huge{Dynamic Detours}\thanks{This research has been initiated/conducted at the AlgUW workshop (Będlewo 09.2025), supported by the Excellence Initiative -- Research University (IDUB) funds of the University of Warsaw. The work of MiP on this manuscript is a part of project BOBR that has received funding from the
European Research Council (ERC) under the European Union’s Horizon 2020 research and innovation programme
(grant agreement No. 948057).
AR was supported by Polish National Science Centre SONATA BIS-12 grant number 2022/46/E/ST6/00143.
MW was supported by Polish National Science Centre SONATA-19 grant number 2023/51/D/ST6/00155.} }

\author{
	Daniel Dadush \\
	\aff{CWI and Utrecht University} \\
    \aff{The Netherlands} \\
	\aff{dadush@cwi.nl}
    \and
	Micha\l{} Pilipczuk \\
	\aff{University of Warsaw} \\
	\aff{Warsaw, Poland}\\
    \aff{michal.pilipczuk@mimuw.edu.pl}\vspace{0.4cm}
    \and
	Amadeus Reinald \\
	\aff{University of Warsaw} \\
	\aff{Warsaw, Poland}\\
	\aff{reinald@mimuw.edu.pl}
    \and
	  Marek Soko\l{}owski \\
	\aff{Max Planck Institute for Informatics} \\
    \aff{Saarland Informatics Campus} \\
	\aff{Saarbrücken, Germany}\\
	\aff{msokolow@mpi-inf.mpg.de}
    \and
	Micha\l{} W\l{}odarczyk \\
	\aff{University of Warsaw} \\
	\aff{Warsaw, Poland}\\
	\aff{michal.wloda@gmail.com}
}
\date{}

\begin{document}

\maketitle

 \begin{textblock}{20}(-1.75, 6.1)
 \includegraphics[width=40px]{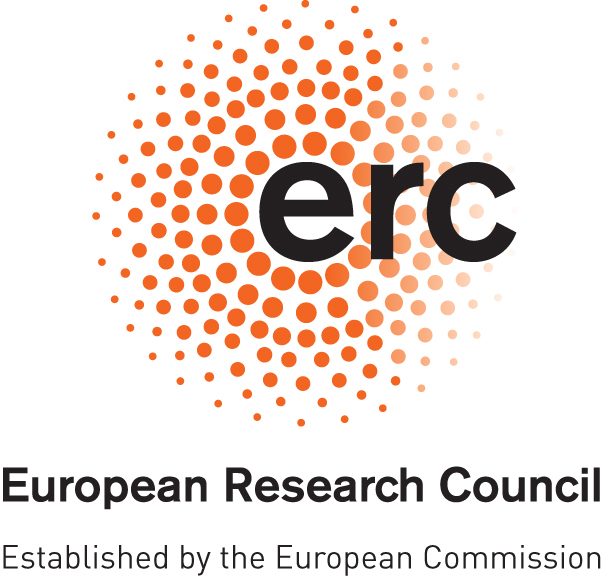}%
 \end{textblock}
 \begin{textblock}{20}(-1.75, 7.1)
 \includegraphics[width=40px]{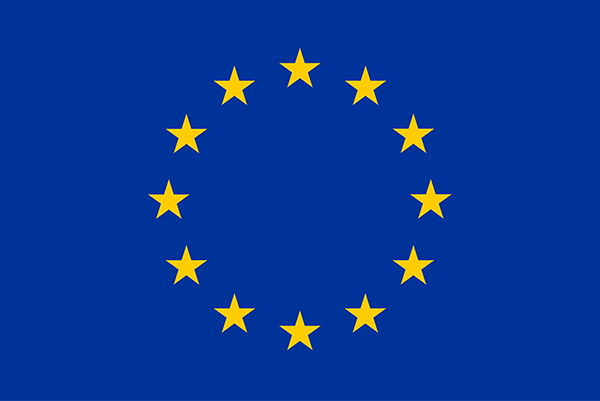}%
 \end{textblock}
\begin{abstract}
	Fix a parameter $k\in \N$. We give dynamic data structures that for a fully dynamic undirected graph~$G$, updated over time by edge insertions and edge deletions, can answer the following queries:
	\begin{itemize}
		\item Long $(u,v)$-path: Given $u,v\in V(G)$, is there a path from $u$ to $v$ of length at least $k$?
		\item Long $(u,v)$-detour: Given $u,v\in V(G)$, is there a path from $u$ to $v$ of length at least $\dist_G(u,v)+k$?
		\item Even/odd $(u,v)$-path: Given $u,v\in V(G)$, is there a path from $u$ to $v$ of even/odd length?
	\end{itemize}
    \noindent
	The amortized time of executing an update or answering a query is $2^{\Oh(k^3)} \log n + \Oh(\log^2 n \log^2 \log n)$ in the first two cases, and $\Oh(\log^2 n \log^2 \log n)$ in the last, where $n$ is the number of vertices of $G$.

    The first result is in sharp contrast with
    known conditional lower bounds for reporting paths of length at most $k$.
    Specifically, there is no data structure supporting queries about $(u,v)$-paths of length at most two in time $n^{o(1)}$ unless the Triangle Conjecture fails.

    Our main technical contribution is a mechanism of ``delayed edge insertion'' that works locally on the level of biconnected components.
\end{abstract}

\section{Introduction}

The area of \emph{parameterized dynamic data structures} aims to apply the principles of parameterized complexity to the setting of data structures for dynamically changing inputs. That is, instead of designing a static algorithm for a parameterized problem that just reads the input and computes the answer, the goal is to propose a data structure that efficiently maintains the answer to the problem under updates to the maintained instance. Following the parameterized paradigm, the complexity guarantees for updates and queries to the data structure can now be measured both in terms of the considered parameters --- where we typically allow superpolynomial dependence --- and in terms of the total instance size --- where we wish to keep the dependence sublinear, or even (poly)logarithmic or constant. 

It is natural to ask which classic results and techniques of parameterized complexity can be lifted to the setting of dynamic data structures. So far, a number of positive results have been found, including data structures for detection and counting of small subgraph patterns~\cite{AlmanMW20,ChenCDFHNPPSWZ21,DvorakT13}, parameterized graph modification problems~\cite{IwataO14,MajewskiPZ24}, kernelization~\cite{IwataO14,BannachHRT22,BertramHJK25}, problems in topologically-constrained and geometric graphs~\cite{AnCJJL0SSS24,KorhonenNPS24}, parameterized string problems~\cite{OlkowskiPRWZ23}, or even problems related to automata~\cite{GrezMPPR22}. Of particular importance is the recent direction of dynamic maintenance of graph decompositions related to classic width parameters, such as treedepth~\cite{ChenCDFHNPPSWZ21,DvorakKT14}, feedback vertex number~\cite{MajewskiPS23}, treewidth~\cite{dynamicTWKorhonenETAL,Korhonen25}, or cliquewidth~\cite{KorhonenS24}. Indeed, such data structures typically also allow maintaining suitable dynamic programming tables for any problem that can be solved using the relevant decomposition by means of a bottom-up dynamic programming. Thus, dynamic data structures for maintaining graph decompositions typically lead to results that apply not to single problems, but to whole classes of problems. This is probably best exemplified by the dynamic counterparts of Courcelle's Theorem for treewidth~\cite{dynamicTWKorhonenETAL,Korhonen25} and for cliquewidth~\cite{KorhonenS24}.

In this work, we add another point to this growing list of results: we give parameterized dynamic data structures for finding long paths with prescribed endpoints. Our first result is the following.

\begin{restatable}{theorem}{thmpath}\label{thm:k-path-dynamic-DS}
	For every $k\in \N$ there exists a data structure that for a fully dynamic $n$-vertex graph $G$, supports the following operations in amortized $2^{\Oh(k^3)} \log n + \Oh(\log^2 n \log^2 \log n)$ time.
	\begin{itemize}[nosep]
		\item \tsc{Insert}($u,v$): insert an edge $uv$, provided it does not exist in $G$;
		\item \tsc{Delete}($u,v$): delete an edge $uv$, provided it exists in $G$;
		\item \tsc{LongPath}$(u,v)$: decide whether there is a path of length at least $k$ between $u$ and $v$.
	\end{itemize}
	The data structure can be initialized for an edgeless $G$ in time $2^{\Oh(k^3)}\cdot n$. 
\end{restatable}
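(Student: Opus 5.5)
The plan is to reduce long-path queries to a structural dichotomy on biconnected components, and then maintain that structure dynamically. The key static fact is this. A $(u,v)$-path of length at least $k$ exists iff, in the block-cut tree of the connected component containing $u$ and $v$, the blocks along the $u$--$v$ path admit a long path. More precisely, the $(u,v)$-path passes through a sequence of blocks $B_1,\dots,B_m$ with cut vertices $c_i$ between them. Each segment can be chosen independently as any path in $B_i$ between its two attachment vertices, so the maximum length is the sum over $i$ of the longest path in $B_i$ between its attachment points.

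For a single $2$-connected block we use a classic lemma, which I would reprove. If the block has a cycle of length at least $2k$, then every pair of its vertices is joined by a path of length at least $k$; this follows from two disjoint paths (Menger) to the long cycle. Otherwise the block has circumference below $2k$, hence treewidth $\Oh(k)$, and also bounded treedepth-like structure (no long paths within it beyond a function of $k$ locally). Each block is therefore either \emph{big}, meaning it contributes at least $k$ for any endpoints, or \emph{small-circumference}. In the latter case, longest paths between attachment points up to threshold $k$ can be computed by dynamic programming over a tree decomposition of width $\Oh(k)$. I would use a DP state of size $2^{\Oh(k^2)}$ per bag, over partitions and truncated lengths; composing along a path then gives the $2^{\Oh(k^3)}$ factor.

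Dynamically, I would maintain the block structure using a dynamic biconnectivity data structure. Holm--de Lichtenberg--Thorup style structures run in $\Oh(\log^2 n\log^2\log n)$ amortized time, which explains the second term in the bound. On top of it I would store a top-tree or link-cut-tree over the block-cut forest. Each block-node is annotated by the truncated value $\min(k,\cdot)$ of its longest attachment-to-attachment path, with aggregation by truncated sums along tree paths. A query becomes a path aggregate in $\Oh(\log n)$ time, or one reports a long path immediately if some big block lies on the route.

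The main obstacle is maintaining the per-block information when edges are inserted into small-circumference blocks. An insertion may merge many blocks along a tree path into one, and a deletion may split a block. Recomputing a DP from scratch is too expensive, since a block can be huge even with circumference below $2k$. Here I would implement the ``delayed edge insertion'' idea. When an insertion merges blocks, we check (in time $2^{\Oh(k^3)}$ via the stored summaries of the merged blocks along the path) whether the merged block now has a cycle of length at least $2k$. If it does, it becomes big and needs no internal data beyond a flag. If it does not, the merged block has bounded circumference, and we combine summaries rather than recompute. For deletions, we charge the rebuild work against earlier insertions so that the bounds are amortized. I expect getting these summaries to compose correctly under both merging and splitting of blocks to be the delicate part of the proof. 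The initialization cost of $2^{\Oh(k^3)}n$ comes from allocating a per-vertex summary for the edgeless graph.
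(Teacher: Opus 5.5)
Your static reduction is sound: the route through blocks, the sum of per-block segment lengths, and the Menger argument for a block with a cycle of length $\ge 2k$ all hold. The dynamic part has a genuine gap, and it sits exactly where you say the difficulty is.

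\textbf{Per-block annotations are ill-defined.} A single truncated value per block does not make sense. The longest path inside a block depends on the pair of attachment vertices through which the route enters and leaves it. A small-circumference block can have unboundedly many cut vertices: $K_{2,n}$ with a pendant edge at every vertex has circumference $4$ and $n+2$ cut vertices, with pair-dependent longest-path values. So neither one number nor a table over all pairs works, and the route value is not an additive top-tree aggregate.

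\textbf{Maintaining an explicit block--cut forest fails.} A single update can merge or split $\Theta(n)$ blocks, and an adversary can repeat this, for example by toggling the edge that closes a path $P_n$ into a cycle $C_n$. Each deletion splits the big block into $n-1$ blocks whose summaries must be created; your big blocks keep only a flag, so nothing is available to reuse. Each insertion merges them again. ``Charging rebuild work against earlier insertions'' fails, because one insertion would be charged $\Theta(n)$ on every toggle. Likewise, ``combine summaries rather than recompute'' has no concrete content: the merged block's longest paths between attachment pairs may use the new edge and arbitrary parts of the former blocks.

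\textbf{The missing idea} is to keep no per-block data at all and to make the delay genuine.
\begin{itemize}
\item The paper maintains one subgraph $H\subseteq G$ with $\tw(H)\le 2k$ in Korhonen's dynamic treewidth structure. It uses a Bodlaender--Kloks automaton to reject insertions that push the treewidth above $2k$, and a longest-path automaton to answer queries.
\item Every edge rejected by $H$ is \emph{marked} in a dynamic biconnectivity structure, extended to return a marked edge in the block of a given edge. This is the Holm--Nadara--Rotenberg--Soko\l{}owski structure, which is the source of the $\Oh(\log^2 n\log^2\log n)$ term, not HdLT.
\item A query $(s,t)$ temporarily adds $st$, so the block of $st$ is exactly $G_{s,t}+st$. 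It then repeatedly moves marked edges of this block into $H$. If every insertion succeeds, $G_{s,t}\subseteq H$ and the automaton gives the answer. If one fails, this block has treewidth $>2k$, so by Birmel\'e it contains a cycle of length $\ge 2k$, and your Menger argument yields the path.
\item Deletions are trivial, since removing edges from $H$ never breaks its invariant.
\item Each unmarking is paid for by the single mark created at insertion time. This gives the amortized bound without ever merging or splitting summaries.
\end{itemize}
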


%\mw{
This should be compared to the task of reporting $(u,v)$-paths of length {\em at most $k$}.
An easy reduction shows that even a static data structure reporting if $\dist_G(u,v) \le k$ in time $g(k,n)$ can be used to detect a $(k+1)$-cycle in time $m\cdot g(k,n)\cdot 2^{\Oh(k \log k)}\log n$, %\mwr{I expanded this expression because I guess the reduction needs color coding}
where $m$ denotes the number of edges~\cite{AlmanMW20, GutenbergWW20}.
Therefore, such a data structure with query time $f(k)\cdot n^{o(1)}$ would break the Triangle Conjecture, which asserts that triangle detection requires time $m^{1+\Omega(1)}$.
What is more, it would also break the 3SUM Conjecture~\cite{AlmanMW20}.
Next, the $k$-Cycle Hypothesis states that for every $\varepsilon > 0$ there exists $k$ so that detecting a cycle of length $k$ requires time $\Omega_k(m^{2-\varepsilon})$~\cite{AbboudW14, GutenbergWW20}.
Under this assumption, reporting if $\dist_G(u,v) \le k$ cannot be even done in time $\Oh_k(m^{1-\varepsilon})$ per query, for any universal constant $\varepsilon > 0$.

We remark that the $\log n$ dependence on the number of vertices in the runtime is necessary: for $k = 1$, the data structure solves precisely the problem of dynamic connectivity, for which a~classic $\Omega(\log n)$ lower bound was given by P{\u{a}}tra{\c{s}}cu and Demaine~\cite{PatrascuD04}.

Finding long paths was among the first parameterized problems considered from the point of view of dynamic data structures.
Alman, Mnich, and Vassilevska-Williams~\cite{AlmanMW20} used a dynamic variant of the color-coding technique to give a fully dynamic data structure with update time $2^{\Oh(k\log k)}\cdot \log n$ that can report whether the maintained graph contains a path of length $k$.
%\mw{
This data structure can be easily adjusted to 
report a $(u,v)$-path of length {\em at least $k$} 
or conclude there is no $(u,v)$-path of length {\em exactly $k$}.
However, if the shortest $(u,v)$-path has length $k-1$ and the second shortest one has length $\Omega(n)$, color-coding alone seems not sufficient to obtain the guarantees of \Cref{thm:k-path-dynamic-DS}.%}

Later, Chen et al.~\cite{ChenCDFHNPPSWZ21} gave a dynamic data structure for detecting a path of length $k$ with amortized update time $2^{\Oh(k^2)}$; thus, independent of $n$. Their approach is based on a dynamic data structure for maintaining an optimum treedepth decomposition of the graph, and crucially relies on the fact that any graph of large treedepth must contain a long path. Therefore, the approach inherently cannot handle queries about paths with prescribed roots.

Our technique allows us to not only detect $(u,v)$-paths of length at least $k$, but also $(u,v)$-paths that are longer by $k$ than the distance between $u$ and $v$. Such paths are often called {\em{detours}}~\cite{bezakova2019findingDetoursFPT, akmal2023, FominGLSSS23, JacobWZ24, BergerSS21}.

\begin{restatable}{theorem}{thmDetour}\label{thm:detour-dynamic-DS}
	For every $k\in \N$ there exists a data structure that for a fully dynamic $n$-vertex graph $G$, supports the following operations in amortized $2^{\Oh(k^3)} \log n + \Oh(\log^2 n \log^2 \log n)$ time.
	\begin{itemize}[nosep]
		\item \tsc{Insert}($u,v$): insert an edge $uv$, provided it does not exist in $G$;
		\item \tsc{Delete}($u,v$): delete an edge $uv$, provided it exists in $G$;
		\item \tsc{LongDetour}$(u,v)$: decide whether there is a path of length at least $\dist_G(u,v)+k$ between $u$ and~$v$.
	\end{itemize}
	The data structure can be initialized for an edgeless $G$ in time $2^{\Oh(k^3)}\cdot n$. 
\end{restatable}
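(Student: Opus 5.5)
We would reduce \tsc{LongDetour} to structural facts about the block-cut tree, as is natural for \Cref{thm:k-path-dynamic-DS}. First we would prove a static characterization. Let $B_1,\dots,B_r$ be the blocks on the block-cut path from $u$ to $v$, with cut vertices $c_0=u,c_1,\dots,c_r=v$ where $B_i$ contains $c_{i-1},c_i$. Every $(u,v)$-path passes through these blocks in order, and within $B_i$ it is an arbitrary $(c_{i-1},c_i)$-path. Hence both $\dist_G(u,v)$ and the longest $(u,v)$-path decompose additively over blocks, and a detour of excess at least $k$ exists if and only if we can pick, in each $B_i$, a $(c_{i-1},c_i)$-path with excess $e_i\ge 0$ over $\dist_{B_i}(c_{i-1},c_i)$ such that $\sum_i e_i\ge k$.

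The key structural lemma would be this: a $2$-connected block $B$ either has long detours between \emph{every} pair of its vertices, or it is ``small'' in a bounded sense. We would first aim to show that if $B$ contains a cycle of length at least $\Theta(k)$, then every pair $x,y\in B$ admits an $(x,y)$-path of excess at least $k$. The plan is to use Menger's theorem to route two disjoint paths from $\{x,y\}$ to the long cycle, and then to choose the longer arc, taking care relative to distances. Otherwise all cycles in $B$ are shorter than $O(k)$, so $B$ has bounded circumference and hence treewidth $O(k)$, and the excess of each pair is at most the circumference. In this case the quantity $\max(\text{excess},k)$ for each pair of cut vertices is computable by dynamic programming over a bounded-width decomposition of size $2^{O(k^3)}$ per block-state. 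The query then reduces to a path aggregate over the block-cut tree: we sum the capped excesses along the $u$–$v$ block path, with saturation at $k$. Each block $B_i$ contributes $\min(k,\mathrm{exc}_{B_i}(c_{i-1},c_i))$, or $k$ if $B_i$ has a long cycle.

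Dynamically, we would maintain the block-cut forest using the biconnectivity structure of Holm et al.\ (giving the $\Oh(\log^2 n\log^2\log n)$ term) inside link-cut/top trees. These top trees would carry for each block a summary: a flag for ``contains a cycle of length $\ge ck$'', plus, for the bounded case, a constant-size (in $n$) state encoding the capped excesses between its attachment cut vertices. The mechanism the paper advertises, delayed edge insertion, would be used here. When an insertion merges many blocks into one, we do not recompute it. If the merged block's cycle through the new edge is long, the block is simply flagged ``long'', and we keep it flagged lazily until deletions may split it. Only blocks of bounded circumference are maintained exactly, via a dynamic treewidth-$O(k)$ structure with $2^{O(k^3)}\log n$ update cost.

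The main obstacle is handling deletions in blocks flagged ``long'', where a single deletion can destroy all long cycles and shatter a huge block into bounded pieces whose excess data we never computed. We would first try an amortization that charges the recomputation to the delayed insertions, keeping in each long block only a bounded-size certificate subgraph (for instance $O(k)$ edges witnessing a long cycle plus connections). Non-certificate edges are held as delayed, so that a deletion outside the certificate changes nothing and a deletion inside it triggers re-insertion of delayed edges one at a time, paid for by their earlier insertion. Proving that the certificate can always be repaired with amortized $2^{O(k^3)}\log n$ work is where we expect the bulk of the difficulty.
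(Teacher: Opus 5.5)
There is a genuine gap in your static structural lemma. You claim that if a block $B$ contains a cycle of length $\Theta(k)$, then every pair $x,y\in V(B)$ has an $(x,y)$-path of excess at least $k$. This is false for every choice of the constant. Take $B=C_{2L}$ with $L$ arbitrarily large and $x,y$ antipodal. Both $(x,y)$-paths have length $L=\dist_B(x,y)$, so the excess is $0$. Menger plus the longer arc gives a path of length at least $|C|/2$, which is exactly what \tsc{LongPath} needs (\Cref{lem:extract-k-path-high-tw}). But nothing forces that path to beat the distance, so ``taking care relative to distances'' cannot be done. Your structure would therefore flag such a block as long and answer \textbf{Yes} incorrectly.

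For detours the dichotomy must be in terms of treewidth, not circumference. By \Cref{thm:extract-k-detour-high-tw}, $\tw(G_{s,t})>32k+46$ forces a path of length at least $\dist_G(s,t)+k$. Since a block is its own relevant part for any pair of its vertices, this does give your ``every pair'' statement, with treewidth in place of circumference. This changes your other branch too: blocks of unbounded circumference but small treewidth (long cycles, ladders) have to be handled by the dynamic programming, not by a flag.

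Even with the corrected dichotomy, you leave the dynamic part open exactly where the difficulty lies: repairing flags or certificates after a deletion shatters a block. Also, a block may contain unboundedly many cut vertices, so a constant-size per-block summary of pairwise excesses is not available.

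The paper avoids per-block bookkeeping altogether. It keeps one global subgraph $H\subseteq G$ with $\tw(H)\le 32k+47$ in the structure of \Cref{lem:dynamic-treewidth}. Every edge whose insertion into $H$ was rejected is marked in the biconnectivity structure of \Cref{thm:biconnectivity-ds}. Deletions simply delete from both structures. A query temporarily adds $st$, then repeatedly calls \tsc{FindMarkedEdge}$(st)$ and tries to insert the returned edge into $H$.
\begin{itemize}
\item If all marked edges of the block of $st$, which is $G_{s,t}+st$ by \Cref{obs:Gst-equals-block-B}, are absorbed, then $G_{s,t}\subseteq H\subseteq G$. The answer is read off as longest minus shortest $(s,t)$-path from the automata.
\item If some insertion fails, all other blocks of $H+G_{s,t}+st$ still have treewidth at most $32k+47$, and treewidth is the maximum over blocks. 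Hence $\tw(G_{s,t})>32k+46$ and the answer is \textbf{Yes}.
\end{itemize}
Nothing is ever flagged persistently, so deletions need no repair. Each unmarking is paid for by the insertion that created the mark.
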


The static problem underlying \cref{thm:detour-dynamic-DS} is called {\sc{Long Detour}}: given an undirected graph $G$, vertices $u,v$, and parameter $k$, decide whether there is a simple path from $u$ to $v$ of length at least $\dist_G(u,v)+k$ in~$G$. An FPT algorithm for this problem was first given by Bez\'akov\'a, Curticapean, Dell, and Fomin~\cite{bezakova2019findingDetoursFPT}. In fact, as we will later discuss, the approach proposed by Bez\'akov\'a et al.\ in~\cite{bezakova2019findingDetoursFPT} is the main point of inspiration for our proofs of \cref{thm:k-path-dynamic-DS,thm:detour-dynamic-DS}, along with the results of Korhonen et al.~\cite{dynamicTWKorhonenETAL} and of Korhonen~\cite{Korhonen25} for the {\em{dynamic treewidth problem}} --- maintaining an approximate tree decomposition of a fully dynamic graph of bounded treewidth.

Finally, we show the robustness of our approach by applying it to a non-parameterized problem: detection of a path of given parity between a prescribed pair of vertices. Specifically, we prove the following.

\begin{theorem}\label{thm:parity-DS}
	There exists a data structure that for a fully dynamic $n$-vertex graph $G$, supports the following operations in amortized $\Oh(\log^2 n \log^2 \log n)$ time.
	\begin{itemize}[nosep]
		\item \tsc{Insert}($u,v$): insert an edge $uv$, provided it does not exist in $G$;
		\item \tsc{Delete}($u,v$): delete an edge $uv$, provided it exists in $G$;
		\item \tsc{EvenPath}$(u,v)$: decide whether there is a path of even length between $u$ and $v$;
		\item \tsc{OddPath}$(u,v)$: decide whether there is a path of odd length between $u$ and $v$.
	\end{itemize}
	The data structure can be initialized for an edgeless $G$ in time $\Oh(n)$. 
\end{theorem}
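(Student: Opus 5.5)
The plan is to reduce parity queries to a statement about biconnected components, and then maintain the needed information with a dynamic biconnectivity structure. First the structural fact. Let $u,v$ be connected and consider the block--cut tree of $G$. Any $(u,v)$-path passes through the same sequence of blocks $B_1,\dots,B_t$ and cut vertices $c_1,\dots,c_{t-1}$ on the tree path between $u$ and $v$. It is a concatenation of paths $c_{i-1}\to c_i$ inside $B_i$, where $c_0=u$ and $c_t=v$. So the set of achievable parities is the sumset of the per-block parity sets.

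Inside a single block $B$ the parity sets are simple:
\begin{itemize}
\item If $B$ is bipartite, every $(x,y)$-path in $B$ has the parity fixed by the bipartition.
\item If $B$ is $2$-connected and non-bipartite, then for every pair $x\neq y$ both parities occur.
\end{itemize}
For the second case, take an odd cycle $C$ in $B$. By Menger's theorem (the fan lemma, using $2$-connectivity) there are two disjoint paths from $\{x,y\}$ to $C$. Of the two ways to close the route around $C$, one gives each parity. When $x=y$ we only need the trivial path.

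Hence an odd $(u,v)$-path exists iff $u,v$ are connected and either some block on the tree path is non-bipartite, or $\dist_G(u,v)$ is odd. The same holds for even paths. Note also that the parity of $\dist$ is the parity of any path through bipartite blocks; more precisely, it is the parity of a BFS-tree/spanning-forest path when all relevant blocks are bipartite.

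Second, the data structure. I would use a dynamic spanning forest $F$, maintained via Holm--de Lichtenberg--Thorup style 2-edge/biconnectivity machinery in $\Oh(\log^2 n\log^2\log n)$ amortized time. This is the bound quoted in the theorem, matching known fully dynamic biconnectivity results (Holm, Rotenberg, et al.). Each non-tree edge $xy$ covers the tree path between $x$ and $y$ in $F$. The blocks correspond to classes of tree edges under the ``covered by a common non-tree edge'' relation, closed transitively as in biconnectivity via cover levels.

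To track parity, mark each non-tree edge $xy$ as \emph{odd} if the tree path from $x$ to $y$ has even length, so that the edge closes an odd cycle. A block is non-bipartite iff it contains an odd fundamental cycle, or the union of its fundamental cycles is odd. Since odd cycles are generated by fundamental cycles within the block, non-bipartiteness is equivalent to the existence of an odd non-tree edge whose tree path lies in the block.

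The query therefore becomes: is there an odd non-tree edge that is ``biconnected-relevant'' to the tree path from $u$ to $v$, i.e.\ whose tree path shares an edge with a block met by the tree path from $u$ to $v$? Equivalently, in the augmented cover structure of Holm et al.\ for biconnectivity queries, I would additionally aggregate a flag ``contains an odd non-tree edge'' alongside the cover witnesses at every level. The path-parity of $F$ itself is a top-tree path aggregate.

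The expected main obstacle is exactly this augmentation. Biconnectivity structures answer ``are $x,y$ biconnected'', but here we need ``does some block along the $u$--$v$ tree path contain an odd edge''. This information must be maintained under the level-raising and replacement-edge searches of the structure without losing amortization. I would try to store, in the top trees at each level, for each cluster the bit of whether an odd non-tree edge covers within it, propagated through the same cover-level bookkeeping used for the biconnected-component identification. Alternatively, following the paper's theme of delayed edge insertion at the block level, one could treat a non-tree odd edge as contracting its block into a ``universal'' block and invoke the biconnectivity structure as a black box. Initialization for an edgeless graph is just building singleton forests, which takes $\Oh(n)$ time.
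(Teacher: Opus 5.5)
The structural part of your proposal is fine. The sumset argument along the block--cut tree path, together with the fan-lemma argument for non-bipartite blocks, is exactly the content of the lemma of \cite{LapaughP84} that the paper invokes: $G$ has $(s,t)$-paths of both parities iff $G_{s,t}$ is non-bipartite. Your observation that a block is non-bipartite iff it contains a non-tree edge closing an odd fundamental cycle with respect to a spanning forest $F$ is also correct.

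The gap is in the data structure, at the step you call the main obstacle and then leave open. Your marks (``odd'' non-tree edges) are defined relative to $F$, and the biconnectivity structure changes $F$ on its own. Suppose a tree edge $e$ is deleted and replaced by a non-tree edge $f$ whose old fundamental cycle was odd. Then the parity of the tree path between any two vertices on opposite sides of the cut of $e$ flips. So the odd/even status of \emph{every} non-tree edge crossing that cut flips at once, which can be $\Theta(m)$ edges from a single deletion. The marking interface of Holm et al.\ only offers explicit \textsf{Mark}/\textsf{Unmark} calls at polylogarithmic amortized cost each, so you cannot afford to keep the marks exact. Storing per-cluster ``odd edge'' bits in the cover-level top trees does not obviously help either. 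The oddness of $xy$ depends on the depth parities of both endpoints, so it is not maintained locally under cuts, links and level changes. As written, nothing establishes the $\Oh(\log^2 n\log^2\log n)$ bound, and the ``contract into a universal block'' remark is not a concrete mechanism.

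The missing idea is to decouple the marks from $F$ and allow them to be stale. The paper keeps a bipartite subgraph $H\subseteq G$ in a separate dynamic bipartiteness structure \tsc{bp} (bipartite double cover plus deterministic connectivity, \Cref{lem:dynamic-bipartite}). It marks in \tsc{bc} exactly those edges whose insertion into \tsc{bp} was rejected, and marks change only by explicit calls.

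A query temporarily adds $st$, repeatedly calls \tsc{FindMarkedEdge}$(st)$, and re-inserts the returned edge into \tsc{bp}.
\begin{itemize}
\item If some re-insertion is rejected, the resulting odd cycle is a simple cycle through an edge of the block of $st$ in $G+st$. Hence it lies in $G_{s,t}$, and both parities exist.
\item If the marked edges of the block run out, then $G_{s,t}\subseteq H$, and the answer is read off \tsc{bp}.
\end{itemize}
Every unmarking is charged to the \tsc{Insert} that created the mark. This gives the amortized bound using only the black-box marking extension (\Cref{thm:biconnectivity-ds}), with no parity-aware modification of the cover-level structure.
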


\paragraph*{Techniques.}

The main engine behind all our theorems is a stronger variant of the {\em delayed edge insertion} technique. 
Eppstein et al.~\cite{EppsteinGIS96} observed that dynamic planarity testing can be reduced to maintaining a dynamic planar graph and checking if a specified edge insertion violates planarity.
From the perspective of amortized operation time, one can simply keep a pile of rejected edges and try inserting them again after some edge is removed.
Naturally, this trick works for other graph properties as well~\cite{ChenCDFHNPPSWZ21, dynamicTWKorhonenETAL}.
It can also be implemented on the level of connected components by keeping a separate pile for each component~\cite{Holm20}.

For the sake of detecting a $(u,v)$-path of length at least $k$, we want to apply the win-win approach from~\cite{bezakova2019findingDetoursFPT, ChenCDFHNPPSWZ21}: either the graph has sufficiently large treewidth (or treedepth) and the answer is yes, or treewidth is small and we can employ dynamic programming.
The issue is that large treewidth does not ensure the existence of a long $(u,v)$-path for an arbitrary pair $(u,v)$.
For this, a stronger condition is necessary:
there needs to be a~biconnected component of large treewidth, located {\em between} $u$ and $v$.

Our strategy is to maintain a graph of bounded treewidth~\cite{Korhonen25} and for each biconnected component implicitly store a pile of rejected edges --- those that make the treewidth grow beyond some threshold.
This is significantly harder than maintaining such piles on the level of connected components because single edge insertion can merge $\Omega(n)$ biconnected components into one while single edge deletion can revert this transformation.
To mitigate this, we keep the rejected edges in a separate data structure: we adapt the dynamic biconnectivity data structure by Holm, Nadara, Rotenberg and Sokołowski~\cite{HolmNRS2025Stoc} to support the following operations: (1) mark edge $e$, and (2) return some marked edge (if there is any) in the biconnected component containing vertices $u,v$.
By combining these two data structures, we can effectively say that either there is a biconnected component of large treewidth between $u$ and $v$ (and then the answer is yes) or that this part of the graph has small treewidth, which allows us to read the answer from a DP state in the dynamic tree decomposition. 
%\ama{I think this is probably the very good way to quickly introduce the intuition of the proof, because this was the spirit. If I had to nitpick, I'm wondering if we want to slightly adjust it to stick more precisely to what we do later (as I have rewritten). Mostly, would it be more natural to already introduce the term "relevant part", which captures this "between" that is a bit vague here? Also, with the current writeup we could think that biconnected components between $s$ and $t$ are treated separately, but we will really only be using high tw of the relevant part/chain of components as a whole.}
%\mwr{I agree but I deliberately sweep a lot under the carpet to keep this part short. If you think some vocabulary should be adjusted to the later sections, please add a suggestion for replacement sentence}
%\ama{It's good like this, I was just nitpicking but I think it would be less intuitive to try to get closer to the techniques in the proof and introduce "relevant part".}

\section{Preliminaries}

In this work, all graphs are simple, finite and undirected.
If $S \subseteq V(G)$, then $G[S]$ denotes the subgraph of $G$ induced by $S$.
For $u, v \in V(G)$, we write $G + uv$ (respectively, $G - uv$) to describe the supergraph (subgraph) of $G$ formed by adding (removing) the edge $uv$.
Note that $G + uv = G$ if $uv \in E(G)$, and similarly $G - uv = G$ if $uv \notin E(G)$.
More generally, if $H$ is a~graph with $V(H) \subseteq V(G)$, then $G + H$ (resp., $G - H$) is formed from $G$ by adding all edges of $H$ that are absent from $G$ (resp., removing all edges of $H$ present in $G$).

% \subsection{definitions}

\subsection{Combinatorial properties of graphs}

\paragraph{Biconnectivity and relevant subgraphs.}
We present two definitions of \emph{biconnectivity} in graphs: one describing it as a~relation on the vertices of the graph, and the other phrasing it as an~equivalence relation on the edges of the graph.
Namely, we say that two vertices $u$, $v$ of a~graph $G$ are biconnected if either $uv \in E(G)$, or there exist two paths, vertex-disjoint except from their endpoints, connecting $u$ and $v$.
Equivalently, $u$ and $v$ are in the same biconnected component of $G$ if $u$ and $v$ cannot be separated by removing %a~single edge (a~\emph{bridge}) from $G$ or
a~single vertex (a~\emph{cut-vertex}) different than $u$ and $v$.
Next, we will say that two edges $e$, $f$ are biconnected if there exists a~simple cycle in $G$ including both $e$ and $f$ in its edge set.
It is standard that the biconnectivity relation is an~equivalence relation on the edges of $G$ (but not the vertices of $G$)~\cite{Harary1969,TarjanV1985}.

We borrow the following definition of a~\emph{relevant subgraph} from~\cite{bezakova2019findingDetoursFPT}.
Let $G$ be a graph, and let $s, t \in  V (G)$. The \emph{$(s, t)$-relevant part} of $G$ is the graph $G_{s,t}$ induced by all vertices that are contained in at least one $(s, t)$-path.
The following lemma shows that we can recover $G_{s,t}$ dynamically, provided we can maintain biconnected components, motivating our use of dynamic biconnectivity data structures.
\begin{observation}\label{obs:Gst-equals-block-B}
    For any graph $G$, and any two vertices $s,t \in V(G)$, $s$ and $t$ are in the same biconnected component $B$ of $G + st$, and $B = G_{s,t} + st$.
\end{observation}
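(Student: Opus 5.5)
We assume $s \neq t$ and that $s$ and $t$ lie in the same connected component of $G$. Otherwise $G_{s,t}$ degenerates, and the statement is read with the convention that $G_{s,t}$ contains $s$ and $t$. The first claim is immediate: $st \in E(G+st)$, so $s$ and $t$ are biconnected in $G+st$ by definition. The block $B$ is then the biconnected component containing the edge $st$, i.e., the equivalence class of $st$ under edge-biconnectivity, together with its endpoints.

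The plan is to prove two things: that $V(B) = V(G_{s,t})$, and that $B$ is an induced subgraph of $G+st$. Together they give $B = (G+st)[V(G_{s,t})] = G[V(G_{s,t})] + st = G_{s,t} + st$.

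\emph{Vertex sets.} For $V(G_{s,t}) \subseteq V(B)$, let $x$ lie on some $(s,t)$-path $P$ in $G$. If $P$ is just the edge $st$, then $x \in \{s,t\} \subseteq V(B)$. Otherwise $P + st$ is a simple cycle in $G+st$ through $st$ and through every edge of $P$. Hence every edge of $P$ is biconnected with $st$ and lies in $B$, so $x \in V(B)$. For the converse, take $x \in V(B) \setminus \{s,t\}$ and an edge $e$ of $B$ incident to $x$. Since $e$ and $st$ are in the same class, there is a simple cycle $C$ in $G+st$ containing both. Then $C - st$ is a simple $(s,t)$-path in $G$ that passes through $x$, so $x \in V(G_{s,t})$.

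\emph{Inducedness.} Let $uv \in E(G+st)$ with $u, v \in V(B)$. We need $uv$ to be biconnected with $st$. The cleanest route is to use that adding an edge between two vertices of a $2$-connected graph keeps it $2$-connected, so the maximality of the block $B$ forces $uv \in E(B)$. Alternatively, we could argue directly with the path-based characterization: take a cycle through $st$ covering $u$, and reroute it through $uv$ using a second path inside $B$. This step, namely checking that blocks are induced in the edge-class formulation used in the paper, is the only point that needs some care. Everything else is a direct translation between cycles through $st$ and $(s,t)$-paths in $G$. Together with the equality of vertex sets, this completes the proof.
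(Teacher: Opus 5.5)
Your proposal is correct and follows the paper's proof: $V(G_{s,t})\subseteq V(B)$ via the cycle $P+st$, then the reverse inclusion (you use the cycle through an edge $e$ at $x$ and $st$, where the paper uses Menger's theorem to get two paths from $x$ to $s$ and to $t$, which is equivalent), and finally equality of the edge sets. Your explicit check that $B$ is an induced subgraph of $G+st$ fills in a step that the paper's last sentence takes for granted.
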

\begin{proof}
    Indeed, if $s,t$ are not joined by a path of length at least $2$ in $G$ (in particular if they are not connected), $B$ consists only of $st$, and satisfies the observation.
    Otherwise, any $(s,t)$-path $P$ along with $st$ forms a cycle in $G+st$, meaning $V(G_{s,t}) \subseteq V(B)$.
    Conversely, for any $b \in V(B)$, Menger's theorem provides us with a path from $b$ to $s$ and one from $b$ to $t$ that are disjoint, and concatenating them yields $b \in V(G_{s,t})$.
    That is, $V(B) = V(G_{s,t})$, and since $st$ is the only edge of $B$ that may not be present in $G_{s,t}$, $B = G_{s,t} + st$. 
\end{proof}

\paragraph{Treewidth.}
We use the classical notion of treewidth \cite{RobertsonS84}.
A~\emph{tree decomposition} of a~graph $G$ is a~pair $\mathcal{T} = (T, \bag)$, where $T$ is a~tree and $\bag \,\colon\, V(T) \to 2^{V(G)}$ (a mapping of nodes of $T$ to sets of vertices, called \emph{bags}), with the properties that: (i) each vertex of $G$ belongs to a~non-empty set of bags inducing a~connected subtree of $T$, and (ii) for each edge of $G$, both of its endpoints belong together to some bag of $\mathcal{T}$.
The \emph{width} of the tree decomposition is the maximum cardinality of any bag, minus $1$.
The \emph{treewidth} of $G$, denoted $\tw(G)$, is the minimum possible width of a~tree decomposition of $G$.

In this work we implicitly use several combinatorial properties of treewidth: for $u, v \in V(G)$, we have $\tw(G) \leq \tw(G + uv) \leq \tw(G) + 1$; and $\tw(G)$ is equal to the maximum treewidth of the biconnected components of $G$ (this follows e.g.\ from~\cite[Lemma 1]{DemaineHT02}).

\subsection{Dynamic data structures}

\paragraph{Dynamic biconnectivity.}
Our work heavily relies on data structures that efficiently maintain the structure of biconnected components in a~fully dynamic graph.
Here, we use the data structure by Holm, Nadara, Rotenberg and Sokołowski~\cite{HolmNRS2025Stoc}:

\begin{theorem}[{\cite[Theorem 1]{HolmNRS2025Stoc}}]
    \label{thm:biconnectivity-ds-basic}
    There exists a~data structure maintaining a~fully dynamic $n$-vertex undirected graph $G$ in the word RAM model with $\Omega(\log n)$ word size. %\mwr{Can we state this assumption at the beginning so that we dont have to restate it in lemmas?}
    The data structure can handle the following updates and queries for vertices $u,v \in V(G)$:
    \begin{itemize}[nosep]
        \item \tsc{Insert}($u,v$) and \tsc{Delete}($u,v$): insert or remove the edge $uv$;
        \item \tsc{IsBiconnected}($u,v$): determine whether two vertices $u$ and $v$ are biconnected.
    \end{itemize}
    Each operation can be performed in amortized time $\Oh(\log^2 n \log^2 \log n)$.
\end{theorem}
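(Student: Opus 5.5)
Since this statement is imported verbatim from~\cite{HolmNRS2025Stoc}, the proof in the paper is simply a citation. If I had to reconstruct it, I would follow the classical Holm--de~Lichtenberg--Thorup framework for dynamic $2$-edge/biconnectivity and then tighten the amortization. The bound I would aim to reach is the one in the statement, with a fallback to a weaker polylogarithmic bound if the refinement in the last step fails.

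\textbf{Spanning forest.} First, maintain a spanning forest $F$ of $G$ with a dynamic connectivity structure based on levels. Each edge $e$ gets a level $\ell(e)\le \log n$, and $F_i$ is a minimum spanning forest with respect to levels restricted to edges of level $\ge i$. The trees of $F$ are stored in top trees (or Euler tour trees), so that path and subtree aggregates are available in $\Oh(\log n)$ time.

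\textbf{Reducing the query to coverage.} Next, reduce \tsc{IsBiconnected}$(u,v)$ to a statement about $F$. Say a non-tree edge $xy$ \emph{covers} every pair of consecutive edges on the tree path $x\ldots y$. Then $u$ and $v$ are biconnected iff $uv\in E(G)$, or they are connected and, for every internal vertex $w$ of the tree path $u\ldots v$, the two path edges at $w$ are linked through the transitive closure of coverage around $w$. This is the standard characterisation via cut vertices, proved with Menger's theorem exactly as in \cref{obs:Gst-equals-block-B}. The closure can be maintained implicitly by marking tree paths as covered at levels; the query then becomes a path-minimum query on the top tree, asking whether every consecutive pair along $u\ldots v$ is covered at some level.

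\textbf{Updates.} Insertions of non-tree edges trigger a ``cover path'' operation, implemented lazily in the top tree. Deletions are handled as in HLT: whenever a tree edge or a covering edge disappears, we search for replacement edges or re-coverings. The non-tree edges examined during this search are promoted one level, and the amortized charging argument rests on the invariant that a tree at level $i$ has at most $n/2^i$ vertices.

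\textbf{Main obstacle.} The hardest step is the amortization. A naive implementation pays $\Oh(\log n)$ levels, times $\Oh(\log n)$ promotions per edge, times $\Oh(\log n)$ per top-tree operation, plus extra logarithmic factors from the coverage bookkeeping; this gives roughly $\Oh(\log^4 n)$ or worse. To reach $\Oh(\log^2 n\log^2\log n)$, I would first try three refinements: replace binary levels by a multi-way ($\log^{\varepsilon} n$-ary) level hierarchy, use word-RAM bit tricks to batch coverage updates so that each top-tree operation costs $\Oh(\log n/\log\log n)$, and make the cover and uncover operations along a path share work across levels. These mirror the techniques behind Wulff-Nilsen's improved connectivity bounds. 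Verifying that the coverage invariants survive this batching is where I expect most of the difficulty to lie.
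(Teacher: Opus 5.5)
\textbf{Main gap: the reconstruction does not reach the stated bound.}
Deferring to \cite{HolmNRS2025Stoc} is exactly what the paper does. Its only in-house justification is the appendix remark that the imported lemmas combine. By \Cref{lem:holm-biconnectivity-reduction}, $m$ updates become $\Oh(m\log n)$ light and $\Oh(m)$ heavy operations on an $\Oh(\log n)$-level restricted tree cover level structure. Light operations cost amortized $\Oh(\log n\log^2\log n)$ and heavy ones $\Oh(\log^2 n\log^2\log n)$. Together with the $\Oh(m\log^2 n)$ overhead of the reduction, this totals $\Oh(m\log^2 n\log^2\log n)$.

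Your reconstruction, however, does not prove this bound. The decisive step is only what you ``would first try'', and you announce a fallback to a weaker bound. The refinements you list also cannot close the distance you yourself measure:
\begin{itemize}
\item Going from your naive count of roughly $\log^4 n$ to $\log^2 n\log^2\log n$ means removing two \emph{full} logarithmic factors.
\item Word packing that brings a top-tree operation to $\Oh(\log n/\log\log n)$ saves only a $\log\log n$ factor.
\item A $\log^{\varepsilon} n$-ary level hierarchy saves at most a $\Theta(\log\log n)$ factor in the number of levels. It also breaks the HDT promotion argument: after a tree split, the smaller side is only guaranteed to hold half the vertices, not a $1/\log^{\varepsilon} n$ fraction, so its edges cannot be promoted.
\end{itemize}
HNRS in fact remark that the known speed-ups for $2$-edge connectivity alone do not even bring biconnectivity down to $\tilde{\Oh}(\log^3 n)$.

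\textbf{What is missing.}
You need the mechanism that actually removes those factors. In HNRS, only amortized $\Oh(1)$ operations per update on the cover level structure are heavy; the remaining $\Oh(\log n)$ per update are light. As I understand the HNRS paper (this is not stated in the present one), the ingredients that make this possible are a cheaper read-only ``transient expose'' on top trees, and, at every vertex, a data structure over its neighbors. That structure is biased twice, once via a new biased disjoint-sets structure, and treats the two neighbors on the exposed path as ``VIP'', handling them in constant time.

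\textbf{Correctness of the query step.}
Independently of running time, your sketch glosses over this point. A path-minimum query is valid only if the cover levels of adjacent tree-edge pairs are kept closed under $c(e_1,e_3)\ge\min\{c(e_1,e_2),c(e_2,e_3)\}$ at every vertex. Marking tree paths records only direct coverage. At a vertex $w$ where $(e,f)$ and $(f,g)$ are covered but $(e,g)$ is not, a path minimum along $e,w,g$ wrongly reports $w$ as a separating cut vertex. You give no way to maintain this closure, in particular under uncovering when a non-tree edge is deleted.

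(Minor: in the HDT framework $F$ is a \emph{maximum}, not minimum, spanning forest with respect to levels.)
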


For the purposes of our work, we need to extend the data structure above to support \emph{marking} edges.
Namely, some edges of the dynamic graph can be designated as \emph{marked}, and we want to additionally support searching marked edges in the biconnected components of the graph.
Formally, we show the following:
\begin{theorem}
    \label{thm:biconnectivity-ds}
    The data structure from \Cref{thm:biconnectivity-ds-basic} can be extended to support \emph{marking} some of the edges of the graph.
    In this regime, the data structure can also support the following operations for an edge $e=uv$ of the current graph $G$:
    \begin{itemize}[nosep]
        \item \tsc{Mark}($uv$) and \tsc{Unmark}($uv$): mark or unmark $uv$;
        \item \tsc{FindMarkedEdge}($uv$): return a marked edge $f$ biconnected with $uv$, or correctly output \textsf{Null} if no such edge exists.
    \end{itemize}
    The amortized time complexity remains at $\Oh(\log^2 n \log^2 \log n)$.
\end{theorem}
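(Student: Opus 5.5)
We would prove \Cref{thm:biconnectivity-ds} by opening up the data structure of \Cref{thm:biconnectivity-ds-basic} and augmenting its internal representation with counters for marked edges. Recall that the Holm--Nadara--Rotenberg--Sokołowski structure, like earlier dynamic biconnectivity structures, maintains a spanning forest $F$ of $G$ in a top-tree (or ET-tree/link-cut-style) representation. Biconnectivity is then encoded by \emph{covering} information: each non-tree edge $e=xy$ covers the tree path $F[x,y]$. Two tree edges are biconnected iff they are linked by a chain of covering paths that overlap in consecutive tree edges around a common vertex. The structure answers \tsc{IsBiconnected} via a hierarchy of levels, with cover levels for tree edges stored lazily in top-tree clusters. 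Our plan is to first map each edge $e$ to a canonical tree edge: $e$ itself if $e\in F$, and otherwise the first edge of $F[x,y]$ incident to $x$. This is correct because a non-tree edge is biconnected exactly with the tree edges on its covered path. Then \tsc{FindMarkedEdge}($uv$) reduces to finding a marked edge whose canonical tree edge lies in the biconnected class of the canonical tree edge of $uv$.

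Next, we attach the marks to the top-tree clusters. For every tree edge $f$ we keep a local list of marked edges whose canonical tree edge is $f$. Each cluster then stores an aggregate, namely a pointer to some marked edge, organised in the same way the structure already aggregates its ``cover'' and ``connectivity-witness'' information. Concretely, the HNRS structure supports a search, analogous to \tsc{IsBiconnected}, that walks the top tree and identifies the part of a cluster lying in the same biconnected class as a given boundary path. The search is guided by summaries such as ``the maximal cover level along the cluster path'' and ``the subtrees hanging off the path that are attached at cover level $\ge i$''. We would augment each such summary with an \emph{existence-of-marked-edge} flag, restricted to the portion that would be biconnected with the cluster path at the relevant cover level. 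This is exactly how Holm et al.\ find replacement edges and non-tree edges in their connectivity and 2-edge-connectivity structures. A descent guided by the flags then locates a marked edge in $\Oh(\log n)$ top-tree steps per level. Mark and unmark change one leaf list and trigger a bottom-up recomputation of $\Oh(\log n)$ cluster summaries. Because the flags merge in $\Oh(1)$ alongside the existing summaries, the asymptotic cost of every original operation is unchanged.

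Finally, edge insertions and deletions in $G$ may change $F$ and the cover levels. When a tree edge $f$ is replaced or a non-tree edge moves into $F$, we must re-bucket the marked edges whose canonical tree edge changes. We handle this by defining the canonical edge lazily: we store marked non-tree edges at their endpoint $x$ (a vertex-level list), and let the cluster aggregate count marks at vertices on the path side appropriately. The counts are then maintained automatically by the top-tree link/cut/expose operations, with no explicit rebucketing.

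The main obstacle is the second step. We must verify that the particular summaries used by HNRS to decide biconnectivity admit a ``marked-witness'' companion that composes correctly under cluster merges, and in particular under their lazy cover-level updates, which apply $\max$-type assignments to whole paths. We would try first to phrase the flag per cover level $i$, as ``some marked edge attaches at level $\ge i$''. This is monotone in $i$ and is updated by the same lazy tags that the structure already uses, so it can be kept within the existing amortized budget of $\Oh(\log^2 n\log^2\log n)$.
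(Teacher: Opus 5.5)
There is a genuine gap. Your proposal is a plan for re-engineering the internals of the Holm et al.\ structure, and the step that carries all the weight is not proved. You assert, but do not show, three things:
\begin{itemize}
\item a ``marked-witness'' summary can be merged in the top-tree clusters in $\Oh(1)$;
\item this summary is compatible with the lazy $\max$-type cover-level updates;
\item it supports a descent that returns an edge from the correct biconnected class.
\end{itemize}
You name this as the main obstacle yourself and only sketch a tentative per-level formulation.

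Moreover, the concrete device you offer to avoid rebucketing fails as stated. That device stores a marked non-tree edge $xy$ at its endpoint $x$ and lets clusters aggregate marks at vertices. If $x$ is a cut vertex, it is incident to tree edges of several blocks. Which block $xy$ belongs to is determined by the tree edge at $x$ leading towards $y$, i.e., by the position of $y$ in $F$, not by $x$ alone. Any search of the form ``find a marked vertex attached at cover level $\ge 0$ to the query path'' would then return edges of the wrong block. (Also, a non-tree edge is biconnected with \emph{at least}, not exactly, the tree edges on its covered path; your canonical map still works by transitivity.)

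The missing idea is that no new aggregate is needed. Holm et al.\ already provide a marking extension of their cover-level structure. It offers light operations $\mathsf{Mark}(v)$ and $\mathsf{Unmark}(v)$ on vertices. It also offers a query $\mathsf{FindFirstReach}(p,q)$, which returns a marked vertex having an incident tree edge whose cover level with some edge of the $F$-path from $p$ to $q$ is at least $0$.

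The paper converts edge marks into vertex marks by working with the $1$-subdivision $G^{(1)}$. Each edge $f$ becomes a degree-two vertex $t_f$. For $f\neq e$, the vertex $t_f$ is $0$-reachable from an $F^{(1)}$-edge $t_eu$ if and only if $f$ and $e$ lie on a common cycle of $G$. This holds because any cycle of $G^{(1)}$ through an edge at $t_f$ must use both edges at $t_f$. This removes exactly the cut-vertex ambiguity that breaks your endpoint storage.

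To keep user marks separate from the marks the biconnectivity structure uses internally, the paper mirrors all non-mark updates of the internal instance \tsc{dt} into a private instance \tsc{dp}, which carries only the user marks. \tsc{FindMarkedEdge}$(e)$ is then a single call $\mathsf{FindFirstReach}(t_e,u)$ in \tsc{dp}, for a tree neighbour $u$ of $t_e$. The time bound follows from the known costs of light and heavy operations.
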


The proof of \Cref{thm:biconnectivity-ds} follows by inspecting the implementation of the original biconnectivity data structure: there, biconnectivity in a~dynamic graph $G$ is maintained using an~intricate dynamic \emph{tree} data structure supporting an~extensive list of updates and queries, including marking \emph{vertices} of the tree and finding marked \emph{vertices} in parts of the tree corresponding to the biconnected components of~$G$.
By suitably extending this tree data structure, we show that it can also be exploited to locate marked \emph{edges} in the biconnected components of $G$.
We move the formal exposition to \Cref{sec:biconnectivity-ds-marking}.

\paragraph{Dynamic treewidth.}

We also rely on the dynamic treewidth data structure of Korhonen~\cite{Korhonen25}. Specifically, we will need to maintain a tree-decomposition of some width $t$, as well as the ability to query for shortest and longest paths in $G$.
The following lemma formalizes this.
Its proof is a standard construction of a tree decomposition automaton, which we defer to~\Cref{sec:tree-automata-path-query}.
\begin{lemma}
    \label{lem:dynamic-treewidth}
    Let $t \in \mathbb{N}$.
    There exists a~data structure maintaining an~$n$-vertex undirected graph $G$ and supporting the following operations for two vertices $u,v$:
    \begin{itemize}[nosep]
        \item \tsc{Insert}($u,v$): insert an edge with endpoints $u$, $v$
        as long as $\tw(G + uv) \leq t$; if  $\tw(G + uv) > t$, the insertion is aborted,
        %. The operation always succeeds if $\tw(G + uv) \leq t$; otherwise, the data structure may abort the insertion;
        %in the latter case the insertion is aborted,
        \item \tsc{Delete}($u,v$): remove edge with endpoints $u$, $v$,
        \item \tsc{ShortestPath}($u,v$) and \tsc{LongestPath}($u,v$): output the lengths of the shortest and longest $(u,v)$-paths in $G$; or $+ \infty$ and $-\infty$, respectively, if there is no $(u,v)$-path.
    \end{itemize}
    Each operation can be performed in amortized time $2^{\Oh(t^3)} \cdot \log n$.
\end{lemma}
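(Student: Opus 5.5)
We prove \Cref{lem:dynamic-treewidth} by combining the dynamic treewidth data structure of Korhonen~\cite{Korhonen25} with a tree decomposition automaton that computes shortest and longest path lengths between two designated vertices. Korhonen's data structure maintains, for a fully dynamic graph of treewidth at most $t$, a rooted tree decomposition of width $\Oh(t)$ and depth $\Oh(\log n)$. It supports edge insertions that are rejected exactly when the treewidth would exceed $t$, and deletions, in amortized $2^{\Oh(t^3)}\log n$ time, and it keeps the runs of any finite-state tree decomposition automaton up to date in the same time, provided the per-node transition can be evaluated in $2^{\Oh(t^3)}$ time. \tsc{Insert} and \tsc{Delete} are therefore forwarded directly. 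A query $(u,v)$ is handled as follows: temporarily mark $u$ and $v$ as labelled vertices, which is a constant number of label updates and hence costs $\Oh(\log n)$ recomputed nodes; read the root state; then unmark them. Equivalently, we attach the labels by modifying only the bags on the root paths of the nodes containing $u$ and $v$, which have length $\Oh(\log n)$.

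The automaton is the standard dynamic programming for paths over a rooted tree decomposition. For a node $x$ with bag $B=\bag(x)$ and subgraph $G_x$ (the vertices in the subtree of $x$, with the edges introduced below $x$), the state records, for each \emph{boundary pattern}, the minimum and the maximum total length of a partial solution realizing it. A boundary pattern consists of:
\begin{itemize}[nosep]
\item a degree $d_b\in\{0,1,2\}$ for each $b\in B$, taken within the partial solution;
\item a pairing of the degree-$1$ bag vertices that records which are connected by path pieces;
\item the information of which piece endpoints are the labelled vertices $u$ and $v$. These have degree $1$ in the final path, or degree $0$ if $u=v$ is handled separately.
\end{itemize}
A partial solution is a collection of vertex-disjoint paths in $G_x$ in which every forgotten vertex has its final degree: $0$ or $2$ in general, and $1$ if it is $u$ or $v$. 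At most two pieces may end at a forgotten vertex, and these must be exactly $u$ and $v$. We also forbid closed cycles, and forbid closing the $u$--$v$ piece while other pieces are still open. The number of patterns is $2^{\Oh(t\log t)}$, so each state, together with its min/max length values, fits the required bound; lengths are numbers below $n$, which fit in a machine word. Introduce, forget and join transitions are the usual connectivity-pattern merges. At the root, the answer is read from the pattern in which everything forms a single $u$--$v$ path, returning $\pm\infty$ if no such pattern is realized.

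The main obstacle is not the combinatorics but interfacing with Korhonen's framework. The automaton must fit his formalism, which expects constant-size labels, whereas our states carry integers up to $n$. I would handle this as the paper's framework allows: use the ``dynamic programming with values in a semiring'' form, where states map patterns to min/max values, and argue that each transition costs $2^{\Oh(t\log t)}$ word operations, which is dominated by $2^{\Oh(t^3)}$.

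Correctness follows from the standard argument. Every $(u,v)$-path restricted to $G_x$ yields a valid pattern, and any compatible combination of patterns at a join glues into a path, since the acyclicity checks in the pairing ensure no cycle is formed. Hence the minimum and maximum at the root equal the shortest and longest $(u,v)$-path lengths.
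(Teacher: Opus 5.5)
\textbf{There is a genuine gap in how you get the exact insertion semantics and the $2^{\Oh(t^3)}$ bound.}

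Korhonen's data structure does not reject an insertion exactly when $\tw(G+uv)>t$. It maintains a decomposition of width $\Oh(t)$ under the \emph{promise} that the treewidth stays at most $t$, and it may accept an insertion that pushes the treewidth above $t$ without reporting failure. Its amortized cost is $2^{\Oh(t)}\log n$ times the evaluation time $\tau(\Oh(t))$ of the automata it maintains.

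You attribute both the exact threshold and the $2^{\Oh(t^3)}$ running time to this black box. Nothing in your argument actually produces the $t^3$: your path automaton costs only $2^{\Oh(t\log t)}$.

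The missing idea is to also maintain the Bodlaender--Kloks automaton $\mathcal{B}_{t,\ell}$, where $\ell=\Oh(t)$ is the width of the maintained decomposition. From the root state it decides whether $\tw(G)\le t$, and its evaluation time is $2^{\Oh(t\ell^2)}=2^{\Oh(t^3)}$. After every insertion the black box accepts, you query this automaton and revert the insertion if the treewidth exceeded $t$. This is exactly where the $2^{\Oh(t^3)}$ comes from.

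The exactness is not cosmetic. The correctness proofs of \textsc{LongPath} and \textsc{LongDetour} rely on an aborted insertion certifying that $\tw(H'+uv)$ exceeds the threshold.

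\textbf{A secondary issue is your query mechanism.} Temporarily relabelling $u$ and $v$ and recomputing only root-to-node paths presupposes two things: a vertex-label update operation, and a worst-case $\Oh(\log n)$ depth bound. You would have to extract both from a black box that is specified in terms of edge updates and amortized bounds.

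The paper sidesteps this as follows:
\begin{itemize}[nosep]
\item It adds two fresh vertices $u^\star,v^\star$ and keeps them in every bag, raising the width by $2$, so the decomposition is boundaried with boundary $\{u^\star,v^\star\}$.
\item A query $(u,v)$ inserts the pendant edges $u^\star u$ and $vv^\star$.
\item It reads the minimum and maximum $(u^\star,v^\star)$-path lengths from the root states, subtracts $2$, and deletes the two edges.
\end{itemize}
This uses only ordinary edge updates, and the automata become the plain shortest/longest path DP between boundary vertices, with no labelled pieces to track. Your DP is otherwise fine in spirit.
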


\paragraph{Dynamic bipartiteness.}

In the proof of \Cref{thm:parity-DS} we will need a data structure that maintains a bipartite graph and reports when inserting some edge would create an odd cycle.
It is well-known that dynamic bipartiteness can be implemented in the same way as dynamic pairwise connectivity~\cite{HenzingerK99}.
We give a simple black-box reduction and employ the deterministic dynamic connectivity data structure~\cite{HolmLT01}.
Note that similar data structures have already been proposed (see e.g., \cite[Theorem 8]{KashyopNNP23}), but for completeness we give an~interface and an~implementation of such a~data structure below.
% \ms{Chatty suggests it's been done here: \cite[Theorem 8]{KashyopNNP23}, I tend to agree}
% \mwr{Then we can add a pointer to them and move the proof to the appendix}

\begin{lemma}
    \label{lem:dynamic-bipartite}
    There exists a~data structure maintaining an~$n$-vertex undirected bipartite graph $G$ and supporting the following operations:
    \begin{itemize}[nosep]
        \item \tsc{Insert}($u,v$): insert edge $uv$ as long as the insertion does not create an odd cycle; if $G + uv$ has an odd cycle, the insertion is aborted,
        \item \tsc{Delete}($u,v$): remove edge $uv$ from the graph,
        \item \tsc{EvenPath}($u,v$) and \tsc{OddPath}($u,v$): decide whether there is a~$(u,v)$-path of even (odd) length.
    \end{itemize}
    Each operation can be performed in amortized time $\Oh(\log^2 n)$.
\end{lemma}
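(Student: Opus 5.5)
We prove Lemma~\ref{lem:dynamic-bipartite} by reducing it to the deterministic fully dynamic connectivity structure of Holm, de Lichtenberg and Thorup~\cite{HolmLT01}. That structure maintains a spanning forest $F$ of $G$ with amortized $\Oh(\log^2 n)$ per update. It supports connectivity queries, and it reports replacement edges when a tree edge is deleted. Its forest is stored in Euler-tour or top trees, and we augment these with a parity (distance modulo $2$) aggregate. Concretely, we keep $F$ in a link-cut or top tree where each edge has weight $1$, so the parity of the tree path between any two vertices can be read in $\Oh(\log n)$ time.

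The key invariant is that the maintained graph $G$ is bipartite. Then for $u,v$ in the same component, every $(u,v)$-path, and in particular every walk, has the same parity as the $F$-path between them. So \tsc{EvenPath}$(u,v)$ answers yes iff $u,v$ are connected and their $F$-distance is even, and \tsc{OddPath} is symmetric. If $u,v$ are disconnected, both answers are no. For $u=v$ the trivial path has length $0$, which is consistent with the rule.

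\textbf{Insert.} Given $uv$, we first query connectivity.
\begin{itemize}[nosep]
\item If $u,v$ lie in different components, adding $uv$ creates no cycle. We insert it into the connectivity structure, which links the two trees of $F$, and we link them in the parity tree as well.
\item If $u,v$ lie in the same component, $G+uv$ has an odd cycle iff the $F$-path from $u$ to $v$ has even length. This holds because $G$ is bipartite, so all cycles through $uv$ have parity $1$ plus the $u$--$v$ parity. In that case we abort the insertion. Otherwise we insert $uv$ as a non-tree edge.
\end{itemize}

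\textbf{Delete.} We pass the deletion to the connectivity structure. When a tree edge is removed and a replacement edge is found, we mirror the resulting cut and link in the parity tree. No bipartiteness issue arises, since deletions preserve bipartiteness. Because the two structures use the same forest $F$, the parity of any tree path stays consistent.

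\textbf{Cost.} Each operation makes $\Oh(1)$ calls to the connectivity structure, costing $\Oh(\log^2 n)$ amortized in total, plus $\Oh(1)$ link, cut and path-aggregate operations on the auxiliary tree per forest change, at $\Oh(\log n)$ each.

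The only delicate point is keeping the auxiliary parity tree synchronized with the forest maintained internally by~\cite{HolmLT01}. This requires the connectivity structure to expose every tree-edge insertion and removal. It does: $F$ changes only by a link on insert, or by a cut followed by at most one replacement link on delete. We therefore treat the HLT structure as a black box that reports these $\Oh(1)$ forest changes per update, which yields the claimed amortized $\Oh(\log^2 n)$ bound.
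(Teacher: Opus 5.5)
Your proof is correct, but it takes a different route from the paper.

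The paper never looks inside the connectivity structure. It runs the structure of~\cite{HolmLT01} on the bipartite double cover $\widehat G$: each vertex $v$ becomes $v_0,v_1$, and each edge $uv$ becomes $u_0v_1$ and $u_1v_0$. It then uses that $G$ has an odd cycle through $v$ iff $v_0$ and $v_1$ are connected in $\widehat G$. \tsc{Insert} adds both copies, tests $v_0\sim v_1$, and reverts if necessary. \tsc{EvenPath} and \tsc{OddPath} are answered by testing $u_0\sim v_0$ and $u_0\sim v_1$.

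You instead keep a spanning forest $F$ of $G$ with a parity aggregate in an auxiliary link-cut or top tree. You then use that, in a bipartite graph, every $(u,v)$-walk has the parity of the $F$-path.

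The paper's reduction is fully modular: any dynamic connectivity structure that only answers connectivity queries can be plugged in. The price is doubling the graph and an insert-then-revert step on rejection.

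Your version needs more access to the connectivity structure. It must expose its spanning forest, changing by $\Oh(1)$ links and cuts per update, and you must keep a second tree structure synchronized with it. This does hold for~\cite{HolmLT01}: the level-$0$ forest changes by one cut plus at most one replacement link per deletion, and by at most one link per insertion. In exchange, you avoid the doubling and the revert, and you reject an insertion before touching the connectivity structure. You also have an explicit certificate at hand: on rejection, the $F$-path plus $uv$ is an odd cycle. Both approaches give $\Oh(\log^2 n)$ amortized time.

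One minor wording slip: it is the claim for all walks that implies the claim for all paths, not the other way round as ``in particular every walk'' suggests.
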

\begin{proof}
Consider graph $\widehat G$ obtained from $G$ by replacing each vertex $v$ with two copies $v_0,v_1$ and replacing each edge $uv$ with two edges $v_0u_1$, $v_1u_0$.
We rely on the following easy observation.

\begin{fact}\label{fact:dynamic-bipartite}
Graph $G$ contains an odd cycle passing through a vertex $v$ if and only if graph $\widehat G$ contains a path from $v_0$ to $v_1$.
\end{fact}

Our data structure maintains the graph $\widehat G$ using the connectivity data structure from~\cite{HolmLT01}, that is, \tsc{Insert}($u,v$) is implemented by insertion of $v_0u_1$, $v_1u_0$, and likewise for \tsc{Delete}($u,v$).
When an edge $uv$ is being added, we need to report if $G + uv$ is still bipartite.
Since insertions violating bipartiteness are rejected, we work under the assumption that $G$ is bipartite.
Observe that a hypothetical odd cycle in  $G + uv$ must use the edge $uv$, hence it must pass through $v$.
By \Cref{fact:dynamic-bipartite}, this can be detected by testing if $v_0,v_1$ are connected in $\widehat G + v_0u_1 + v_1u_0$.
If they are, we revert the two edge insertions in $\widehat G$ and report that $G + uv$ has an odd cycle.
Also, observe that \tsc{EvenPath}($u,v$) (respectively, \tsc{OddPath}($u,v$)) can be implemented by verifying the connectivity in $\widehat{G}$ between $u_0$ and $v_0$ (resp., $u_0$ and $v_1$).

Each operation in our data structure invokes $\Oh(1)$ operations in the connectivity data structure for $\widehat G$.
Since the data structure from~\cite{HolmLT01} handles each operation in amortized time  $\Oh(\log^2 n)$, the same applies to our case.
\end{proof}

\section{Dynamic $(s,t)$\tsc{-Path} and $(s,t)$\tsc{-Detour}}
\label{sec:dynamic-long-paths}

In this section we prove \Cref{thm:k-path-dynamic-DS,thm:detour-dynamic-DS} by proposing appropriate dynamic data structures testing for long $(s,t)$-paths and $(s,t)$-detours.
We begin by setting up the combinatorial foundations of our data structures in \Cref{ssec:paths-from-high-tw}, and then we implement the data structures themselves in~\Cref{ssec:dynamic-path,ssec:dynamic-detour}.

\subsection{Extracting long paths and detours from high-treewidth graphs}
\label{ssec:paths-from-high-tw}

Before defining the data structures, we show here a few structural results ensuring that, for a~given pair $s, t$ of vertices of $G$, if the $(s,t)$-relevant part of $G$ has sufficiently large treewidth, then $G$ admits both long $(s,t)$-paths and long $(s,t)$-detours.

For $(s,t)$-\tsc{Path}, we will need the following folklore result to ensure the existence of a $k$-path.
\begin{lemma}\label{lem:extract-k-path-high-tw}
    If $B$ is a biconnected graph such that $\tw(B) \geq 2k-1$, then for any $s,t \in V(B)$, there exists an $(s,t)$-path of length at least $k$. Moreover, for any graph $G$ and $s,t \in V(G)$ such that $\tw(G_{s,t}) \geq 2k-1$, the same holds.
\end{lemma}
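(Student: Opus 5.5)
Since $B$ has treewidth at least $2k-1$, I would first extract a long cycle and then route $s$ and $t$ onto it using biconnectivity.

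\emph{Step 1: a long cycle.} Take a DFS tree of $B$ (or of any connected graph). If its depth is less than $k$, every root-to-leaf path has at most $k$ vertices. Hence the graph has treedepth at most $k$, and so treewidth at most $k-1$. Since $\tw(B)\geq 2k-1$, this already gives a path on at least $2k$ vertices, so paths are certainly long. To get a cycle we want something stronger. Use the classical fact that a graph with no cycle of length at least $\ell$ has treewidth at most $\ell-2$: take a DFS tree $T$; every non-tree edge joins an ancestor–descendant pair at tree distance at most $\ell-2$. Then the bags consisting of each vertex together with its $\ell-2$ nearest ancestors form a tree decomposition of width $\ell-2$. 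With $\ell=2k$ and $\tw(B)\geq 2k-1$, $B$ contains a cycle $C$ of length at least $2k$.

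\emph{Step 2: routing onto $C$.} We may assume $|V(B)|\geq 3$, since treewidth $\geq 1$ forces this when $k\geq 1$. By Menger's theorem in the $2$-connected graph $B$, there are two disjoint paths from $\{s,t\}$ to $V(C)$. Call them $P_s$ from $s$ to $a\in V(C)$ and $P_t$ from $t$ to $b\in V(C)$, internally disjoint from $C$, with $a\neq b$. Trivial paths are allowed if $s$ or $t$ already lies on $C$. If $s=t$, the statement concerns paths of length $0$ and must be read as $s\neq t$; I would note this. The two vertices $a,b$ split $C$ into two arcs whose lengths sum to $|C|\geq 2k$, so one arc has length at least $k$. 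Concatenating $P_s$, that arc, and $P_t$ gives an $(s,t)$-path of length at least $k$. The path is simple because the pieces are disjoint apart from $a$ and $b$.

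\emph{Step 3: the ``moreover'' part.} Apply \Cref{obs:Gst-equals-block-B}: $B:=G_{s,t}+st$ is biconnected and contains $s,t$. Adding an edge does not decrease treewidth, so $\tw(B)\geq 2k-1$. Run Steps 1–2 in $B$. The resulting path might use the edge $st$. To rule this out, choose the cycle in $G_{s,t}$ itself, since $\tw(G_{s,t})\geq 2k-1$, and use Menger in $B$. If a routing path $P_s$ or $P_t$ used $st$, it would traverse it as its first edge, say from $s$ to $t$. But $t$ is the start of the other path, contradicting disjointness. The arc lies on $C\subseteq G_{s,t}$. Hence the path avoids $st$ and lives in $G$.

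The main obstacle is Step 1, namely making the long-cycle/treewidth bound precise with the exact constant $2k-1$. I would verify the DFS-bag construction carefully: each bag has size at most $\ell-1$, the connectivity of vertex occurrences holds since they form a downward interval in $T$, and edges are covered by the bag of the deeper endpoint.
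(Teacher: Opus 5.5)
Your proof follows the paper's: get a cycle of length at least $2k$ from the treewidth bound, route $s,t$ onto it by Menger, take the longer arc, then apply this to $G_{s,t}+st$. The differences are that the paper cites Birmel\'e for the cycle, while you reprove $\tw\le \ell-2$ for graphs with no cycle of length at least $\ell$ via the DFS-ancestor decomposition (correct for $\ell\ge 3$), and that your extra care about the helper edge is sound but unnecessary, since an $(s,t)$-path of length at least $2$ never uses the edge $st$. The one slip is $k=1$: $\tw(B)\ge 1$ does not force $|V(B)|\ge 3$ (take $B=K_2$), and your DFS bound fails for $\ell=2$ because singleton bags miss the tree edges, so this case must be split off as trivial ($B$ is connected and $s\neq t$), exactly as the paper does.
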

\begin{proof}
      Since $B$ has treewidth at least $2k-1$, a result of Birmelé~\cite{birmeleCircumference} ensures that $B$ contains a cycle $C$ of length at least $2k$.
      If $k=1$, the result holds trivially. If $k \geq 2$, $B$ cannot be a single edge, and for any $s,t \in V(B)$, Menger's theorem applied to $\{s,t\}$ and $C$ yields a path from $s$ to $c_s \in C$ and a vertex-disjoint path from $t$ to $c_t \in C$. Then, taking these two paths along with longer arc of $C$ between $c_s$ and $c_t$ yields a path of length at least $|C| / 2 \geq k$ between them.
      Finally, note that for any $G$ and $s,t \in V(G)$, if $\tw(G_{s,t}) \geq 2k-1$, then \Cref{obs:Gst-equals-block-B} ensures the biconnected component of $G+st$ containing $s,t$ is exactly $G_{s,t} + st$, so $\tw(G_{s,t} + st) \geq 2k-1$ and applying the first part of the lemma to $G_{s,t} + st$ concludes.
\end{proof}

The analogue of~\Cref{lem:extract-k-path-high-tw} for $(s,t)$-\tsc{Detour} was shown in~\cite{bezakova2019findingDetoursFPT}: imposing higher treewidth (in the relevant part of $G$) guarantees also a long detour.
\begin{theorem}[Theorem 3.6 in~\cite{bezakova2019findingDetoursFPT}]\label{thm:extract-k-detour-high-tw}
    For any graph $G$ and $s,t \in V(G)$ such that $\tw(G_{s,t}) > 32k+46$, there exists an $(s,t)$-path of length at least $\dist_G(s,t)+k$.
\end{theorem}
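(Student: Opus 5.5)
The plan is to reprove the Bez\'akov\'a et al.\ result in the same spirit as \Cref{lem:extract-k-path-high-tw}, combining a large-treewidth structure inside the relevant part with BFS layers from $s$. Let $d=\dist_G(s,t)$ and fix a shortest $(s,t)$-path $P$. The target is an $(s,t)$-path of length at least $d+k$. Every $(s,t)$-path lies in $G_{s,t}$, and distances between vertices of $G_{s,t}$ are realised inside $G_{s,t}$. We may therefore replace $G$ by $H=G_{s,t}+st$; by \Cref{obs:Gst-equals-block-B} this is the biconnected component of $G+st$ containing $s,t$. The extra edge $st$ changes treewidth by at most one, and we discard any path that uses it.

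\textbf{Step 1: a long cycle.} Since $\tw(G_{s,t})>32k+46$, Birmel\'e's theorem (as used in \Cref{lem:extract-k-path-high-tw}) gives a cycle $C$ of length greater than $32k+47$ in $H$. If it uses $st$, we pass to a large-treewidth subgraph of $G_{s,t}$ instead; alternatively we apply the excluded grid theorem in its linear form for the required structure, though the constant $32k+46$ suggests a cycle/separation argument rather than grids.

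\textbf{Step 2: exploit distance layering.} Let $L_i=\{x: \dist(s,x)=i\}$. Along any $(s,t)$-path, the number of edges not going from $L_i$ to $L_{i+1}$ exceeds the length minus $d$ by at most a factor of two. Thus it suffices to find an $(s,t)$-path containing at least $k$ ``non-forward'' edges, roughly $k/2$ backward or horizontal steps. By Menger in the biconnected graph $H$, there are two disjoint paths from $\{s,t\}$ to $C$, and we route along the longer arc of $C$. If $C$ spans few BFS layers, say at most $2k$, then because $|C|>32k$ most of its edges are horizontal or backward. One of the two arcs between the attachment points then contains many non-forward steps, giving a detour of length at least $d+k$ after a layer-counting bookkeeping argument.

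\textbf{Step 3: the main obstacle.} If $C$ spans many layers, a long arc could be nearly monotone, even forward, and give no gain; this is the case where shortest-path structure defeats the naive argument. Here we need a local treewidth argument. Restrict to a window of about $\Oh(k)$ consecutive BFS layers that still contains a subgraph of treewidth $\Omega(k)$. This should be possible because, if every window of width $ck$ had small treewidth, gluing window decompositions along layer boundaries would only add a bounded number of vertices per bag. That argument fails as stated, since layers can be large, so we would adjust it by contracting $L_{\le i}$ to a single vertex and $L_{\ge i+ck}$ to another, which keeps the result a minor of $G$. Inside such a window we find the long cycle and reapply Step 2, connecting $s$ and $t$ to it via paths that are monotone in the layers outside the window, so that they cost only $d$ plus $\Oh(k)$ wasted edges. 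Balancing these constants is what produces the threshold $32k+46$. I expect the window-localisation step to carry most of the difficulty, and I would first try to deduce it from the contraction-minor formulation.
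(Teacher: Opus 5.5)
The paper gives no proof of this statement. It is quoted as Theorem~3.6 of Bez\'akov\'a, Curticapean, Dell and Fomin and used as a black box in \Cref{lem:k-detour-query-soundness}. Your argument therefore has to stand on its own, and it does not. Step~3, which carries the whole content of the theorem, is missing, and the mechanism you propose for it cannot work.

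\textbf{What Steps 1--2 correctly give.} Let $h(x)=\dist_G(s,x)$. Every $(s,t)$-path $Q$, traversed from $s$ to $t$, satisfies $|Q|-\dist_G(s,t)=\sum_{xy\in E(Q)}\bigl(1-(h(y)-h(x))\bigr)$, and every summand lies in $\{0,1,2\}$. Let $C\subseteq G_{s,t}$ be a cycle with layer span $\sigma(C)=\max_C h-\min_C h$. Then the two Menger paths plus the longer arc of $C$ give excess at least $|C|/2-\sigma(C)$, whatever the connecting paths do. So your concern about making them monotone outside the window is unnecessary. What you actually need is a cycle with $|C|\ge 2(\sigma(C)+k)$.

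\textbf{Why Step 3 does not produce such a cycle.} Your route is ``a window of $\Oh(k)$ layers with treewidth $\Omega(k)$, then Birmel\'e'', and it fails for two reasons.
\begin{itemize}
\item \emph{The localisation is false.} Take the $r$-subdivided $m\times m$ grid with $r>2ck+1$ and $s$ a branch vertex. It is 2-connected, so it equals $G_{s,t}$, and its treewidth is at least $m$. Every cycle contains an entire subdivided edge $x\ldots y$, whose interior heights are $\min(h(x)+p,\,h(y)+r-p)$. Hence every cycle spans more than $ck$ layers, and every window of $ck+1$ layers induces a forest. No gluing, and no contraction of $L_{\le i}$ and $L_{\ge i+ck}$ (the latter need not even be connected), changes this.
\item \emph{Birmel\'e is quantitatively too weak even when a window has large treewidth.} In the plain grid with $s$ at a corner, the layers are antidiagonals. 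Any $\sigma+1$ consecutive layers have pathwidth at most about $\sigma+1$. So treewidth certifies only cycles of length about $\sigma+2$, never the $2\sigma+2k$ you need. Long thin cycles along such a band do exist, but a treewidth-plus-circumference argument cannot detect them.
\end{itemize}

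The key step is therefore absent: turning $\tw(G_{s,t})>32k+46$ into a cycle that is long relative to the number of layers it spans, or directly into a detour. The constant $32k+46$ is also never derived. You must either cite the result, as the paper does, or supply a genuinely different structural argument.
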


\subsection{Dynamic $(s,t)$-Path}
\label{ssec:dynamic-path}

\paragraph{Dynamic data structure.}
For any fixed $k$, our data structure for $(\geq k)$-paths instantiates the following auxiliary data structures for a fully dynamic graph $G$:
\begin{itemize}[nosep]
    \item A biconnectivity data structure \tsc{bc} for $G$ given by~\Cref{thm:biconnectivity-ds},
    \item A dynamic treewidth data structure \tsc{tw} of a \emph{subgraph} $H \subseteq G$ such that $\tw(H) \leq 2k$ given by~\Cref{lem:dynamic-treewidth}.
\end{itemize}
Then, our calls to the methods of these data structures are prefixed with \tsc{bc} and \tsc{tw}, respectively.
The subgraph $H$ contains all the edges of $G$ whose \emph{last} insertion was accepted by \tsc{tw}.
Then, the marked edges of (biconnected components of) $G$ are exactly $E(G) \setminus E(H)$.
We stress that if an~edge $e \in E(G)$ is not in $H$, we only know that every past attempt to insert $e$ into \tsc{tw} has failed (even though the insertion of $e$ into $H$ may be currently possible).
This phenomenon also naturally appears in Eppstein's technique of delaying invariant-breaking insertions~\cite{EppsteinGIS96}.

The implementation of our data structure is given in \Cref{alg:insert,alg:delete,alg:cap}.
Insertion of an~edge $uv$ (\Cref{alg:insert}) resolves simply to attempting the insertion of $uv$ to \tsc{bc} (which always succeeds) and \tsc{tw} (which may fail); if \tsc{tw} rejects the insertion, we mark $uv$ in \tsc{bc} so as to preserve the invariant.
Removing the edge $uv$ (\Cref{alg:delete}) is also simple: we simply drop $uv$ from the auxiliary data structures.
Finally, in order to test if $G$ contains an~$(s,t)$-path of length at least $k$ (\Cref{alg:cap}), we begin by temporarily adding a~helper edge $st$ to \tsc{bc}, if not already present in $G$.
We then iterate the marked edges in the biconnected component containing $st$, successively adding them to \tsc{tw} until either all such marked edges are exhausted, or some insertion is aborted due to the treewidth of the graph becoming too large.
In the former case, all edges of $G_{s,t}$ are present in \tsc{tw} and so the length of the longest $(s,t)$-path can be simply queried in \tsc{tw}; in the latter, we claim that a~sufficiently long $(s,t)$-path exists due to \Cref{lem:extract-k-path-high-tw}.

\begin{algorithm}
    \caption{Insert}\label{alg:insert}
    \begin{algorithmic}
        \Function{\tsc{Insert}}{$u,v$}
        \State \tsc{bc.Insert}($u,v$)
        \If{\tsc{tw.Insert}($u,v$) aborts}
            \State \tsc{bc.Mark}($uv$)
        \EndIf
        \EndFunction
    \end{algorithmic}
\end{algorithm}

\begin{algorithm}
    \caption{Delete}\label{alg:delete}
    \begin{algorithmic}
        \Function{\tsc{Delete}}{$uv$}
        \State \tsc{bc.Delete}($uv$)
        \State \tsc{tw.Delete}($uv$)
        \EndFunction
    \end{algorithmic}
\end{algorithm}

\newcommand{\HelperEdgeVar}{\textsf{hasHelperEdge}\xspace}

\begin{algorithm}
\caption{LongPath}\label{alg:cap}
\begin{algorithmic}
\Function{\tsc{LongPath}}{$s,t$}
\State \HelperEdgeVar $\gets$ \textsf{False}
\If{$st \notin E(G)$}
    \State \tsc{bc.Insert}$(s,t)$
    \State \HelperEdgeVar $\gets$ \textsf{True}
\EndIf
%\If{\textbf{not} \tsc{bc.IsBiconnected}$(s,t)$}
%    \IIf{\textbf{not} \HelperEdgeVar} \tsc{bc.Delete}$(st)$
%   \State \Return \textbf{No}
%\EndIf
\While{\textsf{True}}
    \State $e \gets$ \tsc{bc.FindMarkedEdge}$(st)$
    %\State \ms{Something happens if no marked edge exists?}
    \IIf{$e = \sf{Null}$} \textbf{break}
    \If{\tsc{tw.Insert}$(e)$ aborts} \Comment{$\tw(H + e) > 2k$}
        %\State \tsc{tw.Delete}$(H,e)$
        \IIf{\HelperEdgeVar} \tsc{bc.Delete}$(st)$
        \State \Return \textbf{Yes}
    \Else
        \State \tsc{bc.Unmark}($e$)
    \EndIf
\EndWhile
\IIf{\HelperEdgeVar} \tsc{bc.Delete}$(st)$
\State \Return \tsc{tw.LongestPath}$(s,t)$ $\geq k$
\EndFunction
\end{algorithmic}
\end{algorithm}

The following lemma proves the correctness of \tsc{LongPath}.

%\ms{Comment: I \textbf{really} prefer to define $B$ as the biconnected component containing $st$ only locally in the setting of this lemma. Not in \Cref{lem:extract-k-path-high-tw} since it's a general statement about biconnected graphs.} 
\begin{lemma}\label{lem:k-path-query-soundness}
    For any $s,t \in V(G)$, \tsc{LongPath}$(s,t)$ correctly outputs whether $s$ and $t$ are joined by a path of length at least $k$.
\end{lemma}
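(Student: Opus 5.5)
We argue by analysing the two ways \tsc{LongPath}$(s,t)$ can terminate, relying on the invariant that the edges in \tsc{tw} (the graph $H$) are exactly the unmarked edges of $G$. Algorithms \ref{alg:insert}--\ref{alg:cap} maintain this invariant: an insertion rejected by \tsc{tw} is marked, and in the query loop an edge is unmarked exactly when its insertion into \tsc{tw} succeeds. Note that the helper edge $st$ is inserted only into \tsc{bc} and never marked, so it never enters \tsc{tw}. Consequently $H\subseteq G$ at all times, so every path found in $H$ is a path of $G$.

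\emph{Case 1: the loop ends because \tsc{FindMarkedEdge}$(st)$ returns \textsf{Null}.} By \Cref{thm:biconnectivity-ds}, no edge biconnected with $st$ in $G+st$ is then marked. By \Cref{obs:Gst-equals-block-B}, the biconnected component $B$ of $G+st$ containing $st$ equals $G_{s,t}+st$, so every edge of $G_{s,t}$ other than $st$ is unmarked and hence lies in $H$. If $st\in E(G)$, it was never marked either, because the loop only inspects edges of $B$; so it too lies in $H$. Thus $G_{s,t}\subseteq H\subseteq G$. Every $(s,t)$-path of $G$ lies in $G_{s,t}$ by definition, so the longest $(s,t)$-path has the same length in $H$ and in $G$. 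Hence \tsc{tw.LongestPath}$(s,t)\ge k$ is the correct answer (this also covers the case that $s,t$ are disconnected, where the answer is $-\infty$).

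\emph{Case 2: some insertion of a marked edge $e$ into \tsc{tw} aborts.} Then $\tw(H+e)>2k$. The key step is to show $\tw(G_{s,t})\ge 2k-1$ and then invoke \Cref{lem:extract-k-path-high-tw}. Since treewidth equals the maximum over biconnected components, some block $B'$ of $H+e$ has treewidth $>2k$. Because $e$ was just added and $H$ alone has treewidth $\le 2k$, the block $B'$ must contain $e$; otherwise $B'$ would be a block of $H$ (or contained in one). A block is $2$-connected or a single edge, so every edge of $B'$ lies on a cycle with $e$ inside $H+e\subseteq G$. Biconnectivity in $G+st$ is coarser than in $H+e$, and $e$ is biconnected with $st$ in $G+st$, so $E(B')\subseteq E(B)$. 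Thus $B'\subseteq B=G_{s,t}+st$, giving $\tw(G_{s,t}+st)>2k$ and hence $\tw(G_{s,t})\ge 2k\ge 2k-1$. \Cref{lem:extract-k-path-high-tw}, applied to $G_{s,t}+st$ or directly in its second form, then yields an $(s,t)$-path of length at least $k$ in $G_{s,t}\subseteq G$. Strictly, the lemma gives the path in $G_{s,t}+st$, but for $k\ge2$ a path of length $\ge k$ never uses the edge $st$ itself, and $k=1$ is trivial once $s,t$ are connected. They are connected here, since a marked edge exists in their block, so $B$ is not just the edge $st$. So \textbf{Yes} is correct.

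The main obstacle is the ``locality'' argument in Case 2: showing that the treewidth blow-up caused by $e$ happens inside the $(s,t)$-relevant part and not in some unrelated region of $H$. The plan handles this through the block decomposition of $H+e$ together with the monotonicity of biconnectivity under adding edges. Termination is clear, since each loop iteration unmarks an edge or exits. We also note that the helper edge is removed before answering, so the state of \tsc{bc} is restored.
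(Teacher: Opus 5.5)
Your proof is correct and follows the paper's argument. It uses the same two termination cases: \Cref{obs:Gst-equals-block-B} gives $G_{s,t}\subseteq H$ when \tsc{FindMarkedEdge} returns \textsf{Null}, and otherwise a block-locality argument gives $\tw(G_{s,t}+st)>2k$ before invoking \Cref{lem:extract-k-path-high-tw}. The only route difference is cosmetic: you localize the high-treewidth block of $H+e$ directly, whereas the paper passes through the sandwich graph $H+G_{s,t}+st$.

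One small slip: if $st\in E(G)$, it may well have been marked before the query, so ``it was never marked'' is false as stated. The correct reason it lies in $H$ at the end of Case~1 is that $st$ is itself an edge of $B$ (biconnectivity on edges is reflexive) and \tsc{FindMarkedEdge}$(st)$ returned \textsf{Null}.
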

\begin{proof}
    We fix $G$ to be the initial graph before the query, and $H \subseteq G$ the graph held by the treewidth data structure. Then, let $G',H'$ be the final states of the graphs $G,H$, respectively, at the end of the run of $\tsc{LongPath}(s, t)$, just before the helper edge $st$ is (potentially) removed from \tsc{bc}.
    %First, if $s$ and $t$ are not connected in $G$, they cannot belong to the same biconnected component of $G'$ after the addition of $st$, meaning \tsc{LongPath} correctly outputs \textbf{No}.
    %Assume now that $s,t$ are connected in $G$.

    Then, the first steps of the algorithm ensure $st \in E(G')$ in any case. 
    In particular, $s$ and $t$ belong to the same biconnected component of $G' = G + st$, which by~\Cref{obs:Gst-equals-block-B} is exactly $G_{s,t} + st$ (note that possibly $G_{s,t} = \{st \}$).
    Let us note that $st$ can only be added to $H'$ if it was already present in $G$ (otherwise, if $st \notin E(G)$, then $st$ added temporarily to \tsc{bc} as an~unmarked edge and cannot be inserted into \tsc{tw} by \tsc{LongPath}).
    Hence $H' \subseteq G$.
    Assume first that the \textsf{while} loop is broken when $e = \textsf{Null}$.
    Then, all edges of $(G \setminus H) \cap G_{s,t}$ have been added to $H'$, meaning $H'[V(G_{s,t})] = G_{s,t}$.
    That is, all paths between $s$ and $t$ in $G$ are contained in the graph $H'$ held by \tsc{tw}, and so \tsc{tw.LongestPath}$(s,t)$ correctly outputs the length of the longest $(s,t)$-path in $G$.
    Otherwise, the \textsf{while} loop returns \textbf{Yes} when \tsc{tw} rejects adding some edge of $(G \setminus H) \cap G_{s,t}$, say $uv$, to $H'$.
    That is, we have $\tw(H') \leq 2k$ and $\tw(H' + uv) > 2k$ by~\Cref{lem:dynamic-treewidth}.
    Consider then the graph $H'' = H + G_{s,t} + st$, note that $H' \subseteq H'' \subseteq G'$, and therefore $V(G_{s,t})$ induces the biconnected component $G_{s,t}+st$ in $H''$ (because it did in $G'$).
    Since we had $\tw(H') \leq 2k$, all biconnected components of $H''$ other than $G_{s,t} + st$ still have treewidth at most $2k$, but $\tw(H'') \geq \tw(H' + uv) > 2k$.
    Therefore, since the treewidth of $H''$ is the maximum treewidth of its biconnected components, we must have $\tw(G_{s,t}+st) > 2k$.
    Then, $\tw(G_{s,t}) > 2k-1$ and~\Cref{lem:extract-k-path-high-tw} yields a $k$-path between $s$ and $t$.
    %}
\end{proof}

\thmpath*
\begin{proof}
The fully dynamic graph $G$ our data structure maintains is exactly the $G$ maintained by the biconnectivity data structure, so clearly \tsc{Insert} and \tsc{Delete} are sound. Then \Cref{lem:k-path-query-soundness} shows the soundness of \tsc{LongPath}.
It remains to justify the amortized runtime of all three operations.
Updates \tsc{Insert}, \tsc{Delete} clearly run in amortized time $2^{O(k^3)} \log n + O(\log^2 n \log^2 \log n)$ by~\Cref{thm:biconnectivity-ds,lem:dynamic-treewidth}; %\micr{now the dependence on $k$ is higher for insertion \ms{fixed}}
and \tsc{Insert} adds at most one marked edge in \tsc{bc}.
Moreover, \tsc{LongPath} runs in constant time, plus a~constant number of calls to \tsc{tw} and \tsc{bc}, plus an~additional constant number of calls to \tsc{tw} and \tsc{bc} for each edge unmarked in the \textsf{while} loop; moreover, no new edges become marked in \tsc{LongPath}.
Hence the running time of \tsc{LongPath} is amortized by the decrease in the number of marked edges in \tsc{bc}, and so by~\Cref{thm:biconnectivity-ds,lem:dynamic-treewidth}, all operations run in amortized time $2^{O(k^3)} \log n + O(\log^2 n \log^2 \log n)$.
\end{proof}

\subsection{Dynamic $(s,t)$\tsc{-Detour}}
\label{ssec:dynamic-detour}

\paragraph{Dynamic data structure.}
For any fixed $k$, our data structure for $(\geq k)$-detours instantiates the following auxiliary data structures for a fully dynamic graph $G$:
\begin{itemize}[nosep]
    \item A biconnectivity data structure \tsc{bc} for $G$ given by~\Cref{thm:biconnectivity-ds},
    \item A dynamic treewidth data structure of a \emph{subgraph} $H \subseteq G$ such that $\tw(H) \leq 32k+47$ guaranteed by~\Cref{lem:dynamic-treewidth}.
    %\ms{@Amadeus: commented out the width-of-the-decomposition part, I think we never discuss this guarantee before} %, holding a tree-decomposition of $H$ of width at most \blue{$9(32k+46) + 8$}.
\end{itemize}
Again, for any biconnected component $B$ of $G$, the marked edges of $B$ are exactly $E(G) \setminus E(H)$.

Queries \tsc{Insert} and \tsc{Delete} are identical to the previous~\Cref{alg:insert,alg:delete}.
Then, \tsc{LongDetour} (\Cref{alg:cap-detour}) is almost identical to~\Cref{alg:cap}, given that \tsc{tw} is initialized with a higher treewidth bound. The only difference is essentially the condition on the length of the longest $(s,t)$-path tested in the low-treewidth case.
%\ms{changed slightly} %when we query \tsc{tw} for shortest and longest paths, we explicit it for completeness.
\begin{algorithm}
\caption{LongDetour}\label{alg:cap-detour}
\begin{algorithmic}
\Function{\tsc{LongDetour}}{$s,t$}
\State \HelperEdgeVar $\gets$ \textsf{False}
\If{$st \notin E(G)$}
    \State \tsc{bc.Insert}$(s,t)$
    \State \HelperEdgeVar $\gets$ \textsf{True}
\EndIf
%\State \ms{I prefer $\textsf{removeEdge}$ to be initially false, and only do stuff if $st \notin E(G)$ in the first place?}
%\If{\textbf{not} \tsc{bc.IsBiconnected}$(s,t)$}
%    \IIf{\HelperEdgeVar} \tsc{bc.Delete}$(st)$
%   \State \Return \textbf{No}
   %\State \ms{Remember to remove the edge if needed}
%\EndIf
\While{\textsf{True}}
    \State $e \gets$ \tsc{bc.FindMarkedEdge}$(st)$
    %\State \ms{Something happens if no marked edge exists?}
    \IIf{$e = \sf{Null}$} \textbf{break}
    \If{\tsc{tw.Insert}$(e)$ aborts} \Comment{$\tw(H + e) > 32k+47$}
        %\State \tsc{tw.Delete}$(H,e)$
        \IIf{\HelperEdgeVar} \tsc{bc.Delete}$(st)$
        \State \Return \textbf{Yes}
    \Else
        \State \tsc{bc.Unmark}($e$)
    \EndIf
\EndWhile
\IIf{\HelperEdgeVar} \tsc{bc.Delete}$(st)$
\State \Return $\tsc{tw.LongestPath}(s,t) - \tsc{tw.ShortestPath}(s,t) \geq k$
\EndFunction
\end{algorithmic}
\end{algorithm}

\begin{lemma}\label{lem:k-detour-query-soundness}
    For any $s,t \in V(G)$, \tsc{LongDetour}$(s,t)$ correctly outputs whether $s$ and $t$ are joined by a~path of length at least $\dist_G(s,t)+ k$.
\end{lemma}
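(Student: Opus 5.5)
The proof follows the proof of \Cref{lem:k-path-query-soundness} almost verbatim, with \Cref{thm:extract-k-detour-high-tw} in place of \Cref{lem:extract-k-path-high-tw}. The only new ingredient is the observation that shortest paths between $s$ and $t$ are also contained in $G_{s,t}$.

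As before, let $G$ and $H\subseteq G$ be the graphs before the query, and $G',H'$ the states at the end of the \textsf{while} loop, just before the helper edge is possibly deleted. Since $st\in E(G')$, \Cref{obs:Gst-equals-block-B} shows that the biconnected component of $G'=G+st$ containing $st$ is exactly $G_{s,t}+st$. Hence \tsc{bc.FindMarkedEdge}$(st)$ only ever returns marked edges of $G_{s,t}$. The helper edge is unmarked and is never inserted into \tsc{tw}, so $H'\subseteq G$.

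\emph{Case 1: the loop ends with $e=\textsf{Null}$.} Then every edge of $G_{s,t}$ lies in $H'$, so $H'[V(G_{s,t})]=G_{s,t}$. Every $(s,t)$-path of $G$, in particular every shortest and every longest one, uses only vertices of $G_{s,t}$ by definition, and hence lies in $H'$. Since $H'\subseteq G$, the shortest and longest $(s,t)$-path lengths in $H'$ equal $\dist_G(s,t)$ and the longest $(s,t)$-path length in $G$. A path of length at least $\dist_G(s,t)+k$ exists if and only if the longest path has length at least $\dist_G(s,t)+k$. So the returned value is correct.

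If $s,t$ are disconnected, both queries return infinities. I would adopt the convention that $-\infty-(+\infty)<k$, so the answer is \textbf{No}, as it should be. This is a small technicality to state explicitly.

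\emph{Case 2: some insertion of $uv\in E(G_{s,t})$ aborts.} Then $\tw(H')\le 32k+47<\tw(H'+uv)$. Consider $H''=H+G_{s,t}+st$, which satisfies $H'+uv\subseteq H''\subseteq G'$. The set $V(G_{s,t})$ induces the biconnected component $G_{s,t}+st$ of $H''$, because this holds in $G'$ and $H''$ contains all of its edges.

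Every other biconnected component of $H''$ contains only edges of $H\subseteq H'$. It is therefore a subgraph of $H'$ and has treewidth at most $32k+47$. Since $\tw(H'')$ is the maximum treewidth over its blocks and $\tw(H'')\ge\tw(H'+uv)>32k+47$, we get $\tw(G_{s,t}+st)>32k+47$. Removing one edge lowers treewidth by at most one, so $\tw(G_{s,t})>32k+46$. \Cref{thm:extract-k-detour-high-tw} then gives the desired detour, so \textbf{Yes} is correct.

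The main point needing care is the block argument in Case 2: showing that the components other than $G_{s,t}+st$ really do have small treewidth. This requires that the edges added outside the relevant part are all already in $H$, which holds because only edges returned by \tsc{FindMarkedEdge}$(st)$ are inserted. Everything else is routine.
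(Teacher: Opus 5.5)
Your proof is correct and follows the paper's own argument: the paper also reuses the proof of Lemma~\ref{lem:k-path-query-soundness} verbatim, reads off $\tsc{tw.LongestPath}(s,t)-\tsc{tw.ShortestPath}(s,t)$ when the loop exhausts the marked edges, and otherwise uses the block argument on $H''=H+G_{s,t}+st$ to obtain $\tw(G_{s,t})>32k+46$ before invoking Theorem~\ref{thm:extract-k-detour-high-tw}. Your remarks that shortest $(s,t)$-paths also lie in $G_{s,t}$, and that the disconnected case correctly yields \textbf{No}, make explicit two details that the paper leaves implicit.
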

\begin{proof}
    The proof follows that of~\Cref{lem:k-path-query-soundness} verbatim except for the following.
    Let $G',H'$ be the graphs held by \tsc{bc} and \tsc{tw} at the end of the run of \tsc{LongDetour}, before (potentially) removing $st$ from \tsc{bc}.
    %The same reasoning ensures that if $s,t$ are not connected, we always output \textbf{No}.
    If the \textsf{while} loop is broken, $\tsc{tw.LongestPath}(s,t) - \tsc{tw.ShortestPath}(s,t)$ correctly outputs the order of the longest $(s,t)$-detour in $G$.
    Otherwise, the \textsf{while} loop returns \textbf{Yes} when we now have $\tw(H' + uv) > 32k+47$ for some edge $uv \in E(G_{s,t})$, by~\Cref{lem:dynamic-treewidth}.
    We argue in the same way that if $\tw(G_{s,t}) > 32k+46$, then~\Cref{thm:extract-k-detour-high-tw} yields a $k$-detour between $s$ and $t$ in $G$.
\end{proof}

The time complexity analysis is exactly the same as in the final proof of~\Cref{thm:k-path-dynamic-DS}, which allows us to conclude the proof of \Cref{thm:detour-dynamic-DS}. % achieves to show the result.
% \thmDetour*

\section{Dynamic Even/Odd Path}
We move on to the description of the data structure allowing querying for the $(s,t)$-paths of a~given parity.
The implementation of this data structure will follow the blueprint laid out in \Cref{sec:dynamic-long-paths}, only that we will use the dynamic bipartiteness data structure in place of dynamic treewidth.
The reason is the well-known observation correlating the parities of $(s,t)$-paths in $G$ to the bipartiteness of the $(s,t)$-relevant part of $G$:

\begin{lemma}[{\cite[Lemma 3]{LapaughP84}}]
    \label{lem:both-parities}
    $G$ contains $(s,t)$-paths of both parities if and only if $G_{s,t}$ is not bipartite.
\end{lemma}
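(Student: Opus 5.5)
The plan is to prove both directions separately, working inside $B = G_{s,t} + st$, which by \Cref{obs:Gst-equals-block-B} is the biconnected component of $G+st$ containing $s$ and $t$. We may assume $s$ and $t$ are connected, since otherwise neither side holds: $G_{s,t}$ is empty or trivial, and there are no paths at all. All $(s,t)$-paths of $G$ lie in $G_{s,t}$ by definition, so every argument can be carried out within $G_{s,t}$.

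For the forward direction we argue by contrapositive. Suppose $G_{s,t}$ is bipartite with sides $X$ and $Y$. Every $(s,t)$-path then alternates between $X$ and $Y$. Its length is therefore even when $s$ and $t$ lie on the same side and odd otherwise, so all $(s,t)$-paths have the same parity.

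For the backward direction, suppose $G_{s,t}$ contains an odd cycle $C$. First, $G_{s,t}$ is connected and contains some $(s,t)$-path, so at least one parity is realized. It remains to produce an $(s,t)$-path of the other parity.

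\begin{itemize}
\item The key tool is Menger's theorem, applied as in the proof of \Cref{lem:extract-k-path-high-tw}, in the graph $B$.
\item $B$ is biconnected and contains $C$ with $|C|\ge 3$, so there are two vertex-disjoint paths $P_s$ from $s$ to some $c_s\in V(C)$ and $P_t$ from $t$ to some $c_t \in V(C)$. We choose them to meet $C$ only at their endpoints, which is possible by truncating each path at its first vertex on $C$.
\item The edge $st$ of $B$ may have to be avoided. If $st\notin E(G)$, either path could a priori use it, but a path from $s$ to $C$ that uses $st$ can be shortcut. More cleanly, we apply Menger in $G_{s,t}$ itself; this works when $G_{s,t}$ is $2$-connected relative to $\{s,t\}$ and $C$, and I will check this via the fan lemma in $B$ while noting that $st$ lies on no shortest linkage.
\item When $c_s\neq c_t$, the two arcs of $C$ between $c_s$ and $c_t$ have lengths summing to the odd number $|C|$, so the arcs have different parities. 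Concatenating $P_s$, either arc, and $P_t$ gives two $(s,t)$-paths whose lengths differ in parity.
\item The degenerate case $c_s=c_t$ happens only when a single vertex is forced, for example $s$ or $t$ lies on $C$. If $s\in C$, we take $P_s$ trivial and choose $c_t\neq s$, which Menger permits since $B$ is biconnected.
\end{itemize}

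The main obstacle is ensuring that the linkage from $\{s,t\}$ to $C$ avoids the auxiliary edge $st$ when $st\notin E(G)$. I would handle this by taking the disjoint paths in $B$ and observing that if one of them uses $st$, say $P_s$ begins with the edge $st$, then $P_s$ passes through $t$. That contradicts its disjointness from $P_t$, unless $P_t$ is trivial, which requires $t\in C$. In that case, however, $P_s$ can be replaced by its part after $t$ … but that part starts at $t$, so instead we reroute.

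The cleanest fix is to invoke Menger with both sources in $\{s,t\}$: vertex-disjoint paths from $\{s,t\}$ to $V(C)$ can never use the edge $st$, because the two paths are disjoint and each starts at a different endpoint, and a path from $s$ that used $st$ would contain $t$. This gives the linkage inside $G_{s,t}$ and completes the proof.
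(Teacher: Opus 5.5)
Your proof is correct. The paper itself gives no argument for this statement. It cites Lemma~3 of Lapaugh and Papadimitriou, which is stated for biconnected graphs, and leaves the passage from $G$ to $G_{s,t}$ implicit. Your proposal supplies a self-contained proof that covers exactly this gap:
\begin{itemize}
\item The forward direction, by contrapositive, is the trivial $2$-colouring argument. It is valid because every $(s,t)$-path lies in the induced subgraph $G_{s,t}$.
\item For the backward direction, you use \Cref{obs:Gst-equals-block-B} to place the odd cycle $C$ inside the $2$-connected block $B=G_{s,t}+st$. You then link $\{s,t\}$ to $V(C)$ by two disjoint paths via Menger, and close along the two arcs of $C$, whose lengths differ in parity because $|C|$ is odd.
\end{itemize}
This is the same Menger-to-a-cycle template the paper uses in \Cref{lem:extract-k-path-high-tw}, so your argument fits the paper's toolkit well.

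The write-up could be much tighter. The one fact Menger needs is that no single vertex $x$ separates $\{s,t\}$ from $V(C)$ in $B$. This holds because $B-x$ is connected, $|V(C)|\ge 3$, and $x$ cannot be both $s$ and $t$. Once you have two vertex-disjoint $\{s,t\}$--$V(C)$ paths, three things follow automatically:
\begin{itemize}
\item they start at different sources;
\item they end at distinct vertices $c_s\neq c_t$ of $C$, so your ``degenerate case'' bullet is vacuous;
\item neither can use the edge $st$, since a path from $s$ through $st$ would contain $t$.
\end{itemize}
You reach this last observation only at the very end. State it once, and drop the hedged detours about fan lemmas, ``shortest linkages'', and rerouting, since they add nothing.
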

% \ms{Technically, their Lemma~3 only states that ``\textbf{biconnected} $G$ contains $(s,t)$-paths of both parities iff $G$ not bipartite'', but it's obviously the same. Should we care?}

\paragraph{Dynamic data structure.}
We instantiate the following auxiliary data structures:
\begin{itemize}[nosep]
    \item A biconnectivity data structure \tsc{bc} for $G$ given by~\Cref{thm:biconnectivity-ds},
    \item A dynamic bipartiteness data structure \tsc{bp} of a bipartite subgraph $H \subseteq G$ given by~\Cref{lem:dynamic-bipartite}.
\end{itemize}

Analogously to the long path and long detour data structures, inserting (resp., removing) an~edge from the data structure resolves to inserting (resp., removing) the edge --- if possible --- within both \tsc{bc} and \tsc{bp} (\Cref{alg:parity-insert,alg:parity-delete}); and if the insertion of the edge to \tsc{bp} fails, the edge is marked in \tsc{bc}.
We only implement the even path query here; the odd path query is completely analogous.
In order to test whether there exists an~even $(s,t)$-path (\Cref{alg:parity-even-path}), we similarly temporarily insert an~unmarked helper edge $st$ to \tsc{bc}; and iterate the marked edges in the biconnected component containing $st$, progressively adding them to \tsc{bp}.
Again, if all such edges are successfully inserted, then $G_{s,t}$ is bipartite and the parity of the length of the $(s,t)$-path can be verified directly in \tsc{bp}; otherwise, an~even $(s,t)$-path exists by \Cref{lem:both-parities}.

\begin{algorithm}
    \caption{Insert for Even/Odd Path}\label{alg:parity-insert}
    \begin{algorithmic}
        \Function{\tsc{Insert}}{$u,v$}
        \State \tsc{bc.Insert}($u,v$)
        \If{\tsc{bp.Insert}($u,v$) aborts}
            \State \tsc{bc.Mark}($uv$)
        \EndIf
        \EndFunction
    \end{algorithmic}
\end{algorithm}

\begin{algorithm}
    \caption{Delete for Even/Odd Path}\label{alg:parity-delete}
    \begin{algorithmic}
        \Function{\tsc{Delete}}{$uv$}
        \State \tsc{bc.Delete}($uv$)
        \State \tsc{bp.Delete}($uv$)
        \EndFunction
    \end{algorithmic}
\end{algorithm}

\begin{algorithm}
\caption{EvenPath}\label{alg:parity-even-path}
\begin{algorithmic}
\Function{\tsc{EvenPath}}{$s,t$}
\State \HelperEdgeVar $\gets$ \textsf{False}
\If{$st \notin E(G)$}
    \State \tsc{bc.Insert}$(s,t)$
    \State \HelperEdgeVar $\gets$ \textsf{True}
\EndIf
\While{\textsf{True}}
    \State $e \gets$ \tsc{bc.FindMarkedEdge}$(st)$
    \IIf{$e = \sf{Null}$} \textbf{break}
    \If{\tsc{bp.Insert}$(e)$ aborts} \Comment{$H+e$ is not bipartite}
        \IIf{\HelperEdgeVar} \tsc{bc.Delete}$(st)$
        \State \Return \textbf{Yes}
    \Else
        \State \tsc{bc.Unmark}($e$)
    \EndIf
\EndWhile
\IIf{\HelperEdgeVar} \tsc{bc.Delete}$(st)$
\State \Return $\tsc{bp.EvenPath}(s,t)$
\EndFunction
\end{algorithmic}
\end{algorithm}

Note that the operation \tsc{OddPath} is implemented exactly as \tsc{EvenPath}, only that in the case where $G_{s,t}$ is bipartite, we conclude by returning $\tsc{bp.OddPath}(s,t)$ instead of $\tsc{bp.EvenPath}(s,t)$.

The amortized time complexity analysis is analogous to those in \Cref{sec:dynamic-long-paths}, only that each operation in \tsc{bp} runs in amortized $\Oh(\log^2 n)$ time.
Therefore, each operation of our data structure takes amortized $\Oh(\log^2 n \log^2 \log n)$ time.

\section{Conclusions}
We presented efficient data structures for three dynamic path length problems: long $(s,t)$-path, long $(s,t)$-detour, and even/odd $(s,t)$-path.
We now briefly discuss possible improvements and generalizations.

\begin{itemize}
    \item Our strategy to obtain long paths and long detours works more generally for any problem to which the following win/win approach applies: when querying vertices $s,t$, low treewidth of the graph allows us to use a tree decomposition automaton answering the query efficiently, and the high treewidth of the relevant subgraph $G_{s,t}$ immediately yields a \textbf{Yes} or \textbf{No} answer to the query over $s,t$.
    % \ms{Removed ``For example, $\ldots$''}
    % For example, we believe it should be possible to strengthen~\Cref{thm:extract-k-detour-high-tw} (by increasing the treewidth bound) to yield not only a detour of length $k$, but also $k$ paths of different lengths.
    % Then, by extending the dynamic treewidth data structure of \Cref{lem:dynamic-treewidth} to report not only the shortest and the longest $(u,v)$-path, but also some $k$ distinct lengths of $(u,v)$-paths (or all of them if there are fewer than $k$), we can derive a~counterpart of \Cref{thm:detour-dynamic-DS} reporting the existence of $(u,v)$-paths of at least $k$ distinct lengths. \mwr{I would personally drop this paragraph, but the content reads ok}

    \item The dependence of $\Oh(\log^2 n \log^2 \log n)$ on the number $n$ of vertices only stems from the time complexity of the dynamic biconnectivity data structure; therefore, any improvement to the time complexity of this data structure automatically implies a~better running time for each of the data structures presented in this work.

    \item
    We believe that the $2^{\Oh(k^3)}$ factor in the running time of the data structures for long path and long detour is not optimal, and that it could be improved to $2^{\Oh(k \log k)}$ or even $2^{\Oh(k)}$.
    The first improvement would require us to avoid the use of the Bodlaender--Kloks \emph{tree decomposition automaton} tracking the \emph{exact} value of the treewidth of the graph (\Cref{lem:automaton-bodlaender}).
    This seems plausible as we basically need the dynamic treewidth data structure to accept edge insertions as long as they do not make the treewidth of the corresponding biconnected component grow above the threshold.
    Then, to achieve the $2^{\Oh(k)}$ dependence on $k$, one also has to improve the evaluation time of the tree decomposition automaton for the longest $(u,v)$-path  %\mwr{removed ``simple''}
    (\Cref{lem:automaton-max}); this seems to be attainable via Cut \& Count or Gaussian elimination methods (see~\cite[Section 11.2]{platypus}).
    Any improvement past $2^{\Oh(k)}$ is highly unlikely: for $k = n - 1$, the long path problem is equivalent to the Hamiltonian Path problem (with prescribed endpoints), for which no $2^{o(n)}$-time algorithm exists under Exponential Time Hypothesis~\cite[Theorem 14.6]{platypus}.
\end{itemize}

\bibliographystyle{alpha}
\bibliography{bibliography.bib}

\clearpage

\appendix

\section{Dynamic biconnectivity with marked edges}
\label{sec:biconnectivity-ds-marking}

We will now sketch how to implement the marking extension in the dynamic biconnectivity data structure, i.e., we will show \Cref{thm:biconnectivity-ds}.

In their implementation of the data structure, Holm et al.\ reduce the problem of dynamic biconnectivity to the \emph{restricted dynamic tree cover level data structure}.
We omit most details in the exposition of this data structure below, opting to present only the elements relevant to our proof.

Suppose $G$ is a~fully dynamic graph and $F$ is its (dynamic) spanning forest (so $V(F) = V(G)$ and $E(F) \subseteq E(G)$).
An~\emph{$\ell$-level} data structure maintains for every pair of adjacent edges $e_1, e_2 \in E(F)$ the \emph{cover level} of $(e_1, e_2)$, denoted $c(e_1, e_2)$, which is an~integer from $[-1, \ell]$.
% The cover levels are transitive in the following sense: For any three edges $e_1, e_2, e_3 \in E(F)$, all incident to the same vertex of $F$, we have $c(e_1, e_3) \geq \min\{c(e_1, e_2), c(e_2, e_3)\}$.
For any pair $e, f \in E(F)$ of non-adjacent edges in the same connected component of $G$, this induces the cover level $c(e, f)$ as the minimum cover level of any pair of adjacent edges on the unique simple path in $F$ connecting $e$ and $f$.
If $e$ and $f$ are in distinct connected components of $G$, we declare that $c(e, f) = -1$.
For the purposes of this section, it is enough to know that additional requirements placed on the definition of cover levels guarantee that two edges $e, f \in E(F)$ are biconnected in $G$ if and only if $c(e, f) \geq 0$.

The forest $F$, along with its cover levels, is manipulated and queried through an~extensive list of operations, split into two types, \emph{light} and \emph{heavy}.
Holm et al.\ implement the restricted dynamic tree cover level data structure with the following amortized time guarantees:
\begin{lemma}[{\cite[Lemma 16]{HolmNRS2025Arxiv}, informal}]
    \label{lem:holm-biconnectivity-treelevels-basic}
    Let $\ell \in O(\log n)$.
    There exists an~$\ell$-level restricted dynamic tree cover level data structure that processes each heavy operation in amortized $O(\log^2 n \log^2 \log n)$ time, and each light operation in amortized $O(\log n \log^2 \log n)$ time.
\end{lemma}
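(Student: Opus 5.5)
This lemma is quoted from \cite{HolmNRS2025Arxiv} and stated only informally, so the plan is to justify our use of it rather than re-derive the construction. We would check two things: that the operations we need are exactly the heavy and light operations listed in \cite[Lemma 16]{HolmNRS2025Arxiv}, and that the parameter regime matches, namely $\ell \in O(\log n)$ and word size $\Omega(\log n)$.

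For completeness, we recall how Holm et al.\ obtain the bounds. The forest $F$ is stored in a top-tree-like balanced hierarchical decomposition of depth $O(\log n)$. Every cluster stores a summary of the cover levels along its cluster path, together with the aggregated information on adjacent edge pairs at the cluster's boundary vertices. Cover levels lie in $[-1,\ell]$ with $\ell = O(\log n)$, so a per-level summary of a cluster fits into $O(\log \log n)$-bit fields.

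\emph{Light operations} (queries and local cover-level lookups) touch $O(\log n)$ clusters. Each cluster is processed in $O(\log^2\log n)$ time by word-packed manipulation of these level summaries, which gives $O(\log n \log^2 \log n)$.

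\emph{Heavy operations} (link/cut, and covering or uncovering a path up to a given level) may in addition need to iterate over all $O(\log n)$ levels at each of the $O(\log n)$ affected clusters. With amortization over the levels, as in the original charging scheme, this yields $O(\log^2 n \log^2 \log n)$.

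The only genuinely delicate point is the amortized accounting for heavy operations, which is carried out in \cite{HolmNRS2025Arxiv}; we would simply cite it. For our extension (\Cref{thm:biconnectivity-ds}) the key point to verify afterwards is different: adding a marked-edge counter to each cluster is a constant-size, associatively composable piece of information. It can therefore be merged alongside the existing summaries without changing either bound.
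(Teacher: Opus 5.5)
Citing \cite[Lemma 16]{HolmNRS2025Arxiv} is exactly what the paper does: the statement is imported as a black box and no proof is given. The first paragraph of your proposal is therefore all that is needed. The problems are in what you add around the citation.

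\textbf{The recalled construction.} Your recollection of how the bounds are obtained should not be offered as justification. It is not how Holm et al.\ get them, and it would not yield them. In this structure, cover levels live on \emph{pairs of adjacent tree edges}. A high-degree vertex therefore carries information about its whole neighbourhood, and that does not fit into a constant number of packed per-level fields of a top-tree cluster. Holm et al.\ explicitly note that the top-tree plus bit-trick techniques from 2-edge connectivity alone do not even give $\tilde{O}(\log^3 n)$ update time here. Reaching the stated bounds needs further machinery:
\begin{itemize}
\item per-vertex neighbourhood structures with biasing, including a biased disjoint-set structure;
\item special treatment of a vertex's (at most two) neighbours on an exposed path;
\item a cheaper read-only ``transient'' expose.
\end{itemize}
Likewise, ``iterate over $O(\log n)$ levels at $O(\log n)$ clusters, amortized over levels'' does not derive $O(\log^2 n \log^2\log n)$. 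Either drop the sketch or label it as loose intuition.

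\textbf{The remark on \Cref{thm:biconnectivity-ds}.} Your closing remark departs from the paper and is unsound as stated. The paper's proof adds no new per-cluster information. Instead it:
\begin{itemize}
\item uses the existing vertex-marking extension (\Cref{lem:holm-biconnectivity-treelevels}, with \textsf{FindFirstReach} and $0$-reachability);
\item works on the $1$-subdivision $G^{(1)}$;
\item runs a shadow instance \tsc{dp} that receives every non-mark update of the internal instance;
\item marks $t_e$ in place of $e$.
\end{itemize}
A per-cluster marked-edge counter is not a sufficient summary, for three reasons. First, marked non-tree edges are not represented by tree edges at all. Second, a count at a cut vertex cannot tell which block a marked edge belongs to; the subdivision vertex $t_e$ is what resolves this. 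Third, whether a marked item is $0$-reachable from the query path depends on the minimum cover level of the connecting pairs, which changes under cover/uncover operations on paths. So the summary must be level-aware, not constant-size. That is precisely what the level-$i$ marks of the marking extension provide within the same time bounds.
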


Moreover, Holm et al.\ implement several extensions of the data structure of \Cref{lem:holm-biconnectivity-treelevels-basic}; one of them is the \emph{marking extension}, which we now describe.
% Three of the operations implemented by Holm et al.\ are light updates {\sf Mark} and {\sf Unmark} and a~light query {\sf FindFirstReach}, which we now describe.
At any point of time, each vertex of $F$ may be arbitrarily \emph{marked} by the user of the data structure.
The user can mark a~vertex $v$ via a~light operation ${\sf Mark}(v)$, and similarly remove a~mark from $v$ via a~light operation ${\sf Unmark}(v)$; neither operation modifies the cover levels of the pairs of edges of the underlying forest.\footnote{Formally, the \emph{marking extension} of the cover level data structure in~\cite{HolmNRS2025Stoc} allows placing \emph{level-$i$ marks} on vertices for each $i \in \{0, 1, \ldots, \ell\}$. However, in our own data structure we will not utilize the marking extension in its full generality. The notion of a~marked vertex in our data structure directly corresponds to \emph{$0$-marked vertices} in~\cite{HolmNRS2025Stoc}.}
Marked vertices can be located in $F$ using a~light query ${\sf FindFirstReach}(p, q)$ with the following semantics: assume that $p, q \in V(F)$ are in the same connected component of $F$, and let $P$ be the unique simple path in $F$ with endpoints $p$ and $q$.
The query returns a marked vertex $v$ that is \emph{$0$-reachable from $P$}: a~vertex for which there exists an~incident edge $e$ and an~edge $f \in E(P)$ such that the cover level of the pair $e, f$ is at least $0$.
If no such vertex exists, the query returns $\bot$.\footnote{The query also provides an~extensive tie-breaking scheme, which is of no relevance here.}
The query does not modify the marks or the underlying forest $F$.

In \cite{HolmNRS2025Arxiv}, it is shown that the marking extension can be implemented on top of the tree cover level data structure:
\begin{lemma}[{\cite[Lemma 16]{HolmNRS2025Arxiv}}]
    \label{lem:holm-biconnectivity-treelevels}
    Let $\ell \in O(\log n)$.
    The $\ell$-level restricted dynamic tree cover level data structure can be modified to support the marking extension within the same amortized time bounds on light and heavy operations.
\end{lemma}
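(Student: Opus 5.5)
The plan is to open up the implementation of the $\ell$-level restricted dynamic tree cover level data structure from \Cref{lem:holm-biconnectivity-treelevels-basic} and augment its internal hierarchy with a small amount of extra information per node. The aim is that marks can be located by a top-down search. The underlying structure is a top-tree-like hierarchical clustering of $F$, in which each cluster $C$ with boundary vertices carries, besides the cover-level summaries already maintained, a few extra fields. The central new field is a Boolean
\[
  \mathsf{reach}_0(C),
\]
defined to be true iff $C$ contains a marked vertex $v$ such that some edge $e$ of $C$ incident to $v$ and some edge $f$ on the cluster path of $C$ satisfy $c(e,f)\ge 0$. The cluster path of $C$ is the path in $F$ between the boundary vertices of $C$. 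Since only $0$-marks are needed (the footnote's level-$0$ marks), one bit suffices, together with an auxiliary field. The auxiliary field records, for each boundary vertex $b$ of $C$, whether a marked vertex of $C$ is $0$-reachable from the edges of $C$ incident to $b$ that lie off the cluster path. This is needed because a later merge may extend the path through $b$.

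The first key step is to show that these fields compose. Consider a merge of child clusters $A$ and $B$ along a shared boundary vertex $x$. A marked vertex in $A$ is $0$-reachable from the cluster path of the parent iff either it was already $0$-reachable from the part of the parent's path inside $A$, or it is $0$-reachable from the edges at $x$ and the pair of edges at $x$ has cover level $\ge 0$. The cover-level data structure already stores this cover level, namely the minimum cover level along paths and at the junction $x$. Hence the parent's fields are a constant-size Boolean formula of the children's fields and the existing cover-level information. The same composition must also be checked for the lazy cover-level updates (the ``cover''/``uncover'' operations that raise or lower cover levels on whole paths). These operations can flip $c(e,f)$ between $-1$ and $\ge 0$. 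I would handle them by storing each reachability bit split by the minimum cover level threshold, i.e.\ as one bit per level $i\in\{-1,\dots,\ell\}$, or more economically as the maximum level at which a mark is reached. A lazy ``raise levels on the path to at least $i$'' tag then updates the summary in $O(1)$ or $O(\ell)$ time.

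Given composition, ${\sf Mark}(v)$ and ${\sf Unmark}(v)$ expose $v$ as a boundary vertex (a light operation), flip a leaf bit, and recompute fields along the root path. ${\sf FindFirstReach}(p,q)$ exposes $p,q$ so that the root cluster path is exactly $P$. If $\mathsf{reach}_0(\text{root})$ is false it returns $\bot$; otherwise it descends, at each step choosing a child whose field certifies a reachable mark, and pushes lazy tags down along the way. Each step is $O(1)$ work per level of the hierarchy, so the light-operation bound $O(\log n\log^2\log n)$ is preserved, and heavy operations merely recompute fields on touched clusters.

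The main obstacle will be the interaction with lazy cover-level tags and the ``restricted'' nature of the structure. Specifically, I must confirm that every operation which changes cover levels does so in a way whose effect on the reachability summaries is computable from the summary alone, without re-examining the cluster's interior. I would first try storing, per cluster, the largest $i$ for which a mark is $i$-reachable from the cluster path. This is a single integer that behaves like a max/min semiring under both merges and path-level updates. Reachability at level $0$ is then the test ``stored value $\ge 0$''.
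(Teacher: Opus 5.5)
The paper gives no proof of this statement. It is imported as a black box from Holm et al.\ (Lemma~16 of their arXiv version), where the marking extension is built together with the cover-level structure, and the paper uses only its level-$0$ marks. You are therefore trying to reprove the cited result. The overall shape of your plan (augment cluster summaries, then descend top-down) is the natural one. However, it has a genuine gap exactly where biconnectivity differs from $2$-edge connectivity: cover levels live on \emph{pairs} of edges at a vertex.

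Your merge rule for clusters $A,B$ joined at $x$ uses ``the pair of edges at $x$'', i.e.\ the junction pair $(g_A,f_B)$ of path edges. It combines this with a per-boundary bit saying that some mark is $0$-reachable from an off-path edge of $A$ at $x$. But take a mark reached through a particular off-path edge $g$ of $A$ at $x$, for example an edge of a cluster raked onto $x$. Whether it is $0$-reachable from the parent path depends on $c(g_A,g)$ and $c(f_B,g)$ for that $g$, and there may be many such $g$. Read literally, your rule fails in both directions:
\begin{itemize}
\item it gives false positives when $c(g_A,f_B)\ge 0$ but $g$ is $0$-linked to neither path edge;
\item it gives false negatives when $c(g_A,f_B)<0\le c(f_B,g)$, a situation that transitivity at $x$ allows.
\end{itemize}
Transitivity does show that $B$'s side adds nothing when $c(g_A,f_B)\ge 0$. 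In the other case, though, you need the cross-cluster levels $c(f_B,g)$, resolved per $g$, and no summary you define contains them. The same problem arises in every rake. So the ``constant-size Boolean formula'' composition is not established. Fixing it requires knowing how the underlying structure represents pairwise cover levels at high-degree vertices.

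The remaining two ingredients are asserted rather than shown.
\begin{itemize}
\item \textbf{Lazy updates.} Compatibility with the lazy cover/uncover updates is explicitly left open (``I would first try''). It must be checked not only for the field ``reachable from the cluster path'' but also for your boundary fields, whose routes run along the cluster path.
\item \textbf{Time bounds.} The lemma's real content is that the extra bookkeeping fits the \emph{specific} light/heavy budgets of the structure in \Cref{lem:holm-biconnectivity-treelevels-basic}. Your claim that $O(1)$ extra work per hierarchy level preserves $O(\log n\log^2\log n)$ presupposes an internal organization you never tie to that implementation. Your per-level fallback also costs $\Theta(\ell)$ per merge unless you show it can be done with word operations.
\end{itemize}

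For this paper the citation is the proof. Replacing it would require working inside Holm et al.'s implementation rather than a generic top tree.
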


We are now ready to (informally) state the main reduction proved by Holm et al.:
\begin{lemma}[{\cite[Lemma 2]{HolmNRS2025Stoc}, informal}]
    \label{lem:holm-biconnectivity-reduction}
    The dynamic biconnectivity data structure can be implemented using a~restricted dynamic tree cover level data structure with $\ell \in O(\log n)$ levels and marking extension so that, when initialized with an edgeless $n$-vertex graph, any sequence of $m$ updates can be modeled using a~sequence of $O(m \log n)$ light operations and $O(m)$ heavy operations in the cover level data structure.
    The time complexity required for this reduction is $O(m \log^2 n)$.
\end{lemma}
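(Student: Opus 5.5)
The statement is the reduction of Holm, Nadara, Rotenberg and Soko\l{}owski, and I would prove it by re-tracing their construction rather than building a new one. Their biconnectivity structure keeps a spanning forest $F$ of $G$ with the $2$-edge-connectivity-style level hierarchy of Holm, de Lichtenberg and Thorup, with $\ell = O(\log n)$ levels. The invariant is that the cover level of a pair of adjacent tree edges records the highest level at which some non-tree edge covers that pair. It follows that $c(e,f) \geq 0$ exactly when $e$ and $f$ lie on a common cycle of $G$.

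\textbf{Operations and their cost.} Each update of $G$ turns into tree-level operations as follows.
\begin{itemize}[nosep]
\item An insertion either links two trees of $F$, which costs $O(1)$ heavy operations, or adds a non-tree edge at level $0$. The latter costs one \emph{cover} of a tree path, also $O(1)$ heavy operations.
\item A deletion of a non-tree edge uncovers its path. A deletion of a tree edge first cuts it and then searches for a replacement edge and for new covers, level by level. The marking extension (\Cref{lem:holm-biconnectivity-treelevels}) is used here to locate the candidate non-tree edges stored at vertices.
\item Each candidate examined and rejected at level $i$ is promoted to level $i+1$. Each such promotion costs $O(1)$ light operations and is charged to the edge. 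Since levels are bounded by $\ell = O(\log n)$, every edge pays for $O(\log n)$ light operations over its lifetime.
\end{itemize}
Summing over $m$ updates gives $O(m)$ heavy and $O(m\log n)$ light operations. The bookkeeping outside the tree structure (adjacency lists per level, promotion counters) costs $O(\log n)$ per light step, which gives the extra $O(m\log^2 n)$ term.

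\textbf{Main obstacle.} The hardest step is correctness of the cover-level invariant under deletion of a tree edge. One must show that after uncovering and replacement the invariant holds again, so that $c(e,f)\ge 0$ still characterizes biconnectivity. This is exactly where \cite{HolmNRS2025Stoc} does its real work, and I would cite it rather than reprove it.

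\textbf{Adding edge marks for \Cref{thm:biconnectivity-ds}.}
\begin{enumerate}[nosep]
\item Subdivide every edge $e = xy$ of $G$ by a midpoint vertex $m_e$. This changes $n$ by at most a factor of $O(1)$ plus the number of edges. It preserves cycles, so $e$ and $f$ are biconnected in $G$ iff the half-edges $xm_e$ and $ym_f$ are biconnected in the subdivided graph.
\item Implement \tsc{Mark}$(e)$ and \tsc{Unmark}$(e)$ by marking or unmarking the vertex $m_e$, which is one light operation.
\item Implement \tsc{FindMarkedEdge}$(uv)$ by calling ${\sf FindFirstReach}(u, m_{uv})$. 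The path in $F$ between $u$ and $m_{uv}$ contains a half-edge of $uv$ whenever $uv$ lies on a cycle. A marked $m_f$ is $0$-reachable from that path exactly when some half-edge of $f$ is biconnected with a half-edge of $uv$. Since biconnectivity of edges is an equivalence relation, this holds exactly when $f$ is biconnected with $uv$.
\item If $uv$ is a bridge, the query must return $f$ only when $f = uv$ is itself marked, which is checked directly.
\end{enumerate}
All added work consists of $O(1)$ light operations per call, so the amortized bounds carry over.
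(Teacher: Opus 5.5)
This matches the paper, which gives no proof of this lemma and simply imports it as a black box from Lemma~2 of Holm, Nadara, Rotenberg and Soko\l{}owski, just as you ultimately do. Don't take your HDLT-style sketch literally, though: with tree edges $xa,xb,xc$ and non-tree edges $ac,cb$, the pair $(xa,xb)$ lies on a cycle but is covered by no single non-tree edge, so cover levels at a vertex must be transitively closed; and which operations count as light or heavy is fixed by their restricted interface, not something your counting derives.

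Your edge-marking section actually proves \Cref{thm:biconnectivity-ds} and uses the same subdivision idea as the paper, but the paper puts the user marks in a separate mirrored cover-level instance \tsc{dp}. This matters because $\tsc{bc}^{(1)}$ already keeps its own marks in \tsc{dt}, so in a shared instance \textsf{FindFirstReach} would also return those internally marked vertices.
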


% They also provide the implementation of the $\ell$-level restricted dynamic tree cover level data structure:
% \begin{lemma}[{\cite[Lemma 16]{HolmNRS2025Arxiv}, informal}]
%     \label{lem:holm-biconnectivity-treelevels}
%     Let $\ell \in O(\log n)$.
%     There exists an~$\ell$-level restricted dynamic tree cover level data structure that processes each heavy operation in amortized $O(\log^2 n \log^2 \log n)$ time, and each light operation in amortized $O(\log n \log^2 \log n)$ time.
% \end{lemma}

Note that \cref{lem:holm-biconnectivity-treelevels-basic,lem:holm-biconnectivity-treelevels,lem:holm-biconnectivity-reduction} together imply \cref{thm:biconnectivity-ds-basic}.

In order to show \cref{thm:biconnectivity-ds}, we will extend the reduction described above by essentially deploying a~\emph{man-in-the-middle} attack.
To do so, we initialize a dynamic biconnectivity data structure, itself spawning an instance of restricted dynamic tree cover level \tsc{dt}, and we create an additional private instance \tsc{dp} maintaining the same restricted dynamic tree cover level as \tsc{dt} (except for marks). Then, intercepting all calls to \tsc{dt} (while still performing them in \tsc{dt}), and suitably forwarding them to \tsc{dp}, allows us to emulate marked edges of $G$ as marked vertices in \tsc{dp}.
The details follow below.

\begin{proof}[Proof of \cref{thm:biconnectivity-ds}]
    Let $G$ be a~dynamic graph, updated by edge insertions and removals, along with edge marking and unmarking.
    Our aim is to construct the data structure \tsc{bc} implementing biconnectivity in $G$ with edge marks as required by the statement of \cref{thm:biconnectivity-ds}.
    
    Let $G^{(1)}$ be the (dynamic) $1$-subdivision of $G$, i.e., the graph with vertex set $V(G) \cup \{t_e \mid e \in E(G)\}$ formed from $G$ by replacing each edge $e = uv$ with a~path $ut_ev$.
    We initialize the following data structures:
    \begin{itemize}
        \item $\tsc{bc}^{(1)}$: the basic dynamic biconnectivity data structure for $G^{(1)}$, provided by \cref{thm:biconnectivity-ds-basic}.
        As promised by \cref{lem:holm-biconnectivity-reduction}, the data structure internally keeps, for some $\ell \in \Oh(\log n)$, a~restricted cover level data structure \tsc{dt} with $\ell$ levels and marking extension, maintaining a~dynamic spanning forest $F^{(1)}$ of $G^{(1)}$ and some set of user marks on the vertices of $F^{(1)}$, updated internally by $\tsc{bc}^{(1)}$.

        \item \tsc{dp}: a~restricted cover level data structure \tsc{dp} with $\ell$ levels and marking extension, which will maintain the same dynamic spanning forest $F^{(1)}$ of $G^{(1)}$, but a~different set of user marks (selected by us).
    \end{itemize}

    We will maintain the following invariant on vertex marks in $\tsc{dp}$: the original vertices $V(G)$ are kept unmarked in \tsc{dp}, and we declare that the vertex $t_e \in V(F^{(1)})$ is marked in \tsc{dp} whenever $e \in E(G)$ is marked in $G$.

    Observe that two vertices $u, v \in V(G)$ are biconnected in $G$ if and only if they are adjacent in $G$ or biconnected in $G^{(1)}$; hence biconnectivity in $G$ can be tracked by $\tsc{bc}^{(1)}$ and a~dynamic set data structure maintaining $E(G)$.
    All updates performed by $\tsc{bc}^{(1)}$ to the instance \tsc{dt}, \emph{except} {\sf \tsc{dt}.Mark} and {\sf \tsc{dt}.Unmark}, are forwarded verbatim by us to \tsc{dp}.
    Note that $\tsc{bc}^{(1)}$ does \emph{not} query \tsc{dp} directly (i.e., it still only queries \tsc{dt} to support the biconnectivity queries). 
    Instead, we will use \tsc{dp} to search for marked edges in $G$.

    Note that each edge insertion or removal in \tsc{bc} (i.e., each edge insertion or removal in $G$) can be translated to a~constant number of vertex and edge updates in $G^{(1)}$, maintained by $\tsc{bc}^{(1)}$.
    It remains to implement the updates and queries related to the edge marks in $G$.
    Naturally, marking (respectively, unmarking) an~edge $e \in E(G)$ with \tsc{bc.Mark} (resp. \tsc{bc.Unmark}) is implemented by calling ${\sf \tsc{dp.}Mark}(t_e)$ (resp. ${\sf \tsc{dp.}Unmark}(t_e)$).
    The query $\tsc{bc.FindMarkedEdge}(e)$ (given $e \in E(G)$, return marked $f \in E(G)$ biconnected with $e$) is implemented as follows.
    If $e$ is already marked in \tsc{bc}, we can return $e$.
    Otherwise, let $u$ be an~arbitrary neighbor of $t_e$ in the spanning forest $F^{(1)}$ of $G^{(1)}$ maintained by \tsc{dp}, and let us consider the result of calling ${\sf \tsc{dp.}FindFirstReach}(t_e, u)$.

    \begin{itemize}
        \item If the call returns a~vertex of $G^{(1)}$, then it is a~vertex $t_f$ of the subdivision distinct from $t_e$ (since only subdivision vertices are marked in $G^{(1)}$ and $e$ is unmarked in $G$).
        Moreover, by the definition of ${\sf FindFirstReach}$ and the definition of cover levels in $G^{(1)}$, there exists an~edge $t_fw \in E(F^{(1)})$ incident to $t_f$ such that the edges $t_eu$ and $t_fw$ are biconnected in $G^{(1)}$.
        Thus there exists a~simple cycle in $G^{(1)}$ containing these two edges.
        This cycle corresponds to a~simple cycle in $G$ containing both $e$ and $f$, and so $f$ can be safely returned as a~correct answer to the query.

        \item Now suppose the call returns $\bot$.
        We claim that there is no marked edge in $G$ biconnected with~$e$.
        Suppose otherwise that such an~edge $f$ exists (so, equivalently, $t_f$ is marked in $G^{(1)}$).
        Let $w$ be a~neighbor of $t_f$ in $F^{(1)}$.
        Since $e$ and $f$ are biconnected in $G$, there exists a~simple cycle in $G$ containing both $e$ and $f$, which translates to a~simple cycle in $G^{(1)}$ containing $t_e u$ and $t_f w$ as edges.
        Hence $c(t_e u, t_f w) \geq 0$ in $F^{(1)}$ and thus $t_f$ is $0$-reachable from $t_e u$.
        Since $t_f$ is marked in \tsc{dp}, this yields a~contradiction.
    \end{itemize}

    By \cref{lem:holm-biconnectivity-treelevels-basic,lem:holm-biconnectivity-reduction,lem:holm-biconnectivity-treelevels}, all updates relayed to \tsc{dp} performed by the original biconnectivity data structure $\tsc{bc}^{(1)}$ take amortized $O(\log^2 n \log^2 \log n)$ time.
    Moreover, each operation related to the edge marks in $G$ (marking or unmarking of an~edge of $G$, or searching for a~marked biconnected edge) translates in constant time to a~constant number of light operations in $\tsc{bc}^{(1)}$; therefore, all such operations are supported in additional amortized $O(\log n \log^2 \log n)$ time by \cref{lem:holm-biconnectivity-treelevels-basic,lem:holm-biconnectivity-treelevels}.
\end{proof}

\section{Dynamic treewidth with shortest and longest paths}\label{sec:tree-automata-path-query}

In this section we prove \cref{lem:dynamic-treewidth}. We assume the reader's familiarity with the terminology of boundaried graphs and tree decomposition automata, introduced in~\cite[Appendix~A]{dynamicTWArxiv}. The main part of the proof is the following construction of an automaton, which essentially boils down to writing a standard dynamic programming algorithm for the {\sc{Longest Path}} problem working on a tree decomposition.

\begin{lemma}\label{lem:automaton-max}
For every $t\in \N$, 
there is a tree decomposition automaton $\Aa^\mx_t$ such that for any boundaried tree decomposition $\Tt=(T,\bag,\edges)$ of width $t$ of a boundaried graph $G$, if $\rho$ is the run of $\Aa^\mx_t$ on $\Tt$ and $x$ is the root of $T$, then from the state $\rho(x)$ one can compute, in time $2^{\Oh(t \log t)}$, the following information: for any $u,v\in \bnd G$, what is the maximum length of a simple $(u,v)$-path in $G$? The evaluation time of $\Aa^\mx_t$ is $2^{\Oh(t \log t)}$. 
\end{lemma}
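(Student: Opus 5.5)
We design the automaton as the usual dynamic programming for \textsc{Longest Path} over a tree decomposition, phrased in the boundaried-graph framework of~\cite[Appendix~A]{dynamicTWArxiv}. For a node $x$ of $T$, let $G_x$ be the boundaried graph formed by the bags in the subtree of $x$ together with the edges assigned to that subtree by $\edges$. Its boundary is $\bnd G_x\subseteq \bag(x)$, of size at most $t+1$. The state $\rho(x)$ will store a table indexed by \emph{partial solution profiles}. A profile on the boundary $S=\bnd G_x$ consists of four pieces of data:
\begin{itemize}[nosep]
\item a degree function $d\colon S\to\{0,1,2\}$;
\item a perfect matching $M$ on the vertices of degree at most $1$ that are endpoints of path pieces, recording which pairs of endpoints lie on a common piece;
\item a flag on each degree-$\le 1$ endpoint saying whether it is a terminal of the final path or still open to be extended outside;
\item a designation of which at most two boundary vertices are the global terminals $u,v$.
\end{itemize}
The table entry is the maximum number of edges of $G_x$ in a linear forest $F\subseteq G_x$ realizing the profile, or $-\infty$ if no such forest exists. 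Realizing the profile means three things. Every vertex of $G_x-S$ that $F$ touches has degree exactly $2$ in $F$, except for the designated terminals, which have degree $1$. Every component of $F$ has at least one endpoint in $S$, or is a complete $(u,v)$-path. Finally, the endpoint pairing induced on $S$ is $M$. Since at most $t+1$ boundary vertices carry $O(1)$ labels plus a matching, there are $2^{\Oh(t\log t)}$ profiles, so each state has size $2^{\Oh(t\log t)}$.

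Next we define the transitions for each operation of the automaton model: introduce a boundary vertex, introduce or assign an edge between boundary vertices, forget a boundary vertex, and join (gluing two boundaried graphs along their boundaries). Introducing a vertex or an edge only updates degrees and merges matched pairs; we forbid closing a cycle, which is detected when an edge joins two ends of the same piece. Forgetting a vertex $w$ requires $w$ to be either unused ($d(w)=0$), internal ($d(w)=2$), or a designated terminal of degree $1$ that is already a global end. The join combines the two degree functions by summing them (requiring the sum to be at most $2$) and composes the two matchings, rejecting any combination that creates a cycle. This composition is the standard connectivity-pattern merge, and iterating over all pairs of profiles costs $2^{\Oh(t\log t)}$. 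In every case the new value is the maximum over compatible predecessor profiles of the sum of their values, plus $1$ for an introduced edge. The evaluation time is therefore $2^{\Oh(t\log t)}$, as claimed.

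At the root $x$, for given $u,v\in\bnd G$, we read off the answer by maximizing over the root profiles in which $u$ and $v$ are the designated terminals, both of degree $1$, matched to each other, and no other piece is open. This takes time $2^{\Oh(t\log t)}$. The shortest-path variant needed for \Cref{lem:dynamic-treewidth} is identical with $\min$ in place of $\max$, or it can be obtained more simply through a distance table on the boundary.

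The main obstacle is correctness of the join: we must argue that gluing two partial linear forests yields a single simple path exactly when the composed profile certifies it. The delicate points are that no cycle arises through pieces alternating between the two sides, and that the values add up without double-counting. Double-counting is avoided because $\edges$ assigns each edge to exactly one node. For the cycle condition, the plan is to prove, by induction over the decomposition, that the matching in a profile exactly encodes the endpoint pairing of $F$ restricted to components meeting $S$. Acyclicity then reduces to acyclicity of the multigraph obtained as the union of the two boundary matchings, which the transition checks explicitly.
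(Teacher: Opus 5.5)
Your proposal is correct and is essentially the paper's proof. The core of your profile (a degree map into $\{0,1,2\}$ on the boundary plus the endpoint pairing, valued by the maximum total length of a realizing family of disjoint paths, with the standard introduce/forget/join merges) is exactly the paper's state $(\delta,M)$ and invariant, and you spell out the transitions that the paper leaves to the reader.

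The extra terminal flags and designations are unnecessary. As written, they also cannot record a forgotten degree-$1$ terminal, since designations live only on $S$. This does no harm here: any $u,v\in\partial G$ lies in $\partial G_x$ for every node $x$ with $u\in V(G_x)$, so no terminal is ever forgotten. You can therefore simply require $M$ to pair up all of $d^{-1}(1)$, as the paper does.
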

\begin{proof}
	Recall that in the setting of boundaried graphs, we assume that all the vertices come from a common reservoir of vertices $\Omega$. For a given set $B\subseteq \Omega$ with $|B|\leq t+1$, let $S[B]$ be the set of all pairs $(\delta,M)$ such~that
	\begin{itemize}
		\item $\delta$ is a function from $B$ to $\{0,1,2\}$, and
		\item $M$ is a partition of $\delta^{-1}(\{1\})$ into sets of size $2$.
	\end{itemize}
	Note that for any $B$ as above, we have $|S[B]|\leq 2^{\Oh(t\log t)}$. Then we define the state space $Q$ of $\Aa^\mx_t$ to consist of all pairs $(B,f)$, where $B\subseteq \Omega$ is such that $|B|\leq t+1$ and $f$ is a function from $S[B]$ to $\N \cup \{-\infty\}$. %\ama{I don't know what these objects are, just assuming they might want to output infinity now}
    Note that writing down one state of $Q$ boils down to specifying $B$ and $|S[B]|\leq 2^{\Oh(t\log t)}$ integers.
	
	The idea behind this definition is that we want the following assertion ($\bigstar$) to hold: 
	\begin{quote}
		For any  boundaried tree decomposition $\Tt=(T,\bag,\edges)$ of a boundaried graph $G$, if $q=(B,f)$ is the state assigned by the run of $\Aa^\mx_t$ to the root of $T$, then
	\begin{itemize}
		\item $B=\bnd G$; and
		\item for any $(\delta,M)\in S[B]$, $f(\delta,M)$ is the maximum total length of a family $\Ll$ of vertex-disjoint paths in $G$ such that (i) $M=\{\textrm{endpoints of }P\colon P\in \Ll\}$ and (ii) every vertex $b\in B$ is incident to exactly $\delta(b)$ edges of $\bigcup_{P\in \Ll} E(P)$; or $-\infty$ if no such $\Ll$ exists.
	\end{itemize}
\end{quote}
	Note that if we achieve ($\bigstar$), then for any pair of vertices $u,v\in \bnd G$, the length of the longest $(u,v)$-path is equal to the maximum of $f(\delta,\{u,v\})$, where $\delta$ ranges over all functions from $B$ to $\{0,1,2\}$ such that $\delta^{-1}(\{1\})=\{u,v\}$. We output $-\infty$ when there is no $(u,v)$-path.
	
	It remains to specify the initial mapping and the transition function of $\Aa^\mx_t$. For the initial mapping, we simply define it so that ($\bigstar$) is true for one-bag tree decompositions; and it can be computed in time $2^{\Oh(t\log t)}$ by brute force, for we work with a graph on at most $t+1$ vertices. For the transition function, writing it so that ($\bigstar$) can be proved by a bottom-up induction on the tree decomposition is a routine task; we leave it to the reader. In particular, the evaluation time is $2^{\Oh(t\log t)}$.
\end{proof}

The same reasoning as in the proof of \cref{lem:automaton-max} yields also a tree decomposition automaton $\Aa^\mn_t$ that has the same properties, except it provides access to the lengths of the shortest paths between the vertices of $\bnd G$, instead of the longest. The proof can be repeated verbatim except for replacing the word ``maximum'' with ``minimum'', and ``$-\infty$'' with ``$+\infty$''.

The dynamic treewidth data structures~\cite{dynamicTWKorhonenETAL, Korhonen25} for parameter $t$ maintain a tree decomposition of width $\Oh(t)$ under the assumption that the treewidth of the graph remains at most $t$.
It may happen though that treewidth exceeds $t$ after edge insertion but the data structure does not report failure.
To determine for sure if treewidth is at most $t$, one can maintain an automaton that tests this property at the cost of increasing the dependence on $t$ of the evaluation time of the automaton (see~\cite[\S 2.1]{dynamicTWKorhonenETAL}).

\begin{lemma}[\cite{BodlaenderK91}]
\label{lem:automaton-bodlaender}
For every $k \le \ell\in \N$, 
there is a tree decomposition automaton $\Bb_{k,\ell}$ such that for any boundaried tree decomposition $\Tt=(T,\bag,\edges)$ of width $\ell$ of a  graph $G$, if $\rho$ is the run of $\Bb_{k,\ell}$ on $\Tt$ and $x$ is the root of $T$, then from the state $\rho(x)$ one can determine, in time $2^{\Oh(k\ell^2)}$, whether $\tw(G) \le k$. The evaluation time of $\Bb_{k,\ell}$ is $2^{\Oh(k\ell^2)}$.
\end{lemma}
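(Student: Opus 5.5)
This is the classical result of Bodlaender and Kloks, so the plan is to recast their dynamic programming as a tree decomposition automaton in the formalism of~\cite[Appendix~A]{dynamicTWArxiv}. The automaton simulates, bottom-up, the Bodlaender--Kloks DP that tests $\tw(G)\le k$ given a tree decomposition of width $\ell$.

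\textbf{States.} Fix a node $x$ with bag $B_x$ ($|B_x|\le \ell+1$) and let $G_x$ be the boundaried graph induced below $x$, with boundary $B_x$. A partial solution is a tree decomposition of $G_x$ of width at most $k$. Following Bodlaender--Kloks, I restrict it to its \emph{characteristic*}: the tree decomposition is rooted and restricted to $B_x$. The nodes of the underlying tree are compressed to the ``trunk'' relevant to $B_x$, i.e.\ the minimal subtree spanning nodes whose bags meet $B_x$, with degree-two paths contracted. Each remaining node is recorded by its bag intersected with $B_x$ together with a compressed integer sequence of bag sizes along contracted paths. The key combinatorial lemma of Bodlaender--Kloks shows that such sequences can be replaced by typical sequences, of which there are only $2^{\Oh(k)}$ distinct ones per path. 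It also shows that the trunk has $\Oh(\ell)$ nodes. Hence a characteristic is described by $\Oh(\ell)$ subsets of $B_x$ of size at most $k+1$, plus typical sequences, giving $2^{\Oh(k\ell)}$ characteristics.

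\textbf{Semantics and transitions.} A state is the full set of characteristics realizable by $G_x$ with the boundary $B_x$. The number of possible states is large, but the evaluation time counts only operations on one state. The state is the \emph{full set of characteristics}, a family of $2^{\Oh(k\ell)}$ objects. Transitions correspond to introduce vertex/edge, forget and join (after normalizing the input decomposition to a nice form, which the boundaried formalism permits via the \textsf{edges} map). For each operation, new characteristics are obtained by combining one or two old ones. At joins, this merges pairs of trunks and interleaves typical sequences, with $2^{\Oh(k\ell)}$ ways per pair. Checking all pairs costs $(2^{\Oh(k\ell)})^2$ times the number of interleavings. The total is $2^{\Oh(k\ell^2)}$ once one accounts that an interleaving of sequences on $\Oh(\ell)$ paths each of length $\Oh(\ell)$ is a choice among $2^{\Oh(\ell^2)}$-ish options raised to the $k$-dependent alphabet. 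This matches the claimed bound.

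\textbf{Readout.} At the root, $\tw(G)\le k$ holds iff the state set is nonempty, which is checked in time linear in the state size.

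\textbf{Main obstacle.} The main difficulty is the correctness of the typical-sequence compression and of the join transition, i.e.\ showing that dominance between characteristics is preserved under the operations. Rather than reprove this, I would cite Bodlaender--Kloks for correctness and only verify that each step is local to a bag and bounded in time, so that it fits the automaton interface. This is exactly how~\cite[\S 2.1]{dynamicTWKorhonenETAL} uses it.
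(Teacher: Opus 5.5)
Your overall route matches the paper's. The paper gives no argument of its own for this lemma: it imports it from \cite{BodlaenderK91}, with the automaton phrasing taken from \cite[\S 2.1]{dynamicTWKorhonenETAL}, and your last paragraph amounts to the same citation.

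What does not hold up is the accounting you build on top of the citation. Since the bound $2^{\Oh(k\ell^2)}$ is part of the statement, you should either defer it to the citation too or fix it.

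First, you cannot contract all degree-two paths of the trunk. The restricted bags $Y_j\cap B_x$ change along such paths, and forgets, edge introductions and joins need them node by node. Only maximal segments with a \emph{constant} restricted bag may be collapsed, with their bag-size sequence replaced by a typical sequence.

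Second, with that compression the structure does not have $\Oh(\ell)$ nodes. Even with $\Oh(\ell)$ leaves and branch nodes, each vertex of $B_x$ meets each of the $\Oh(\ell)$ paths between them in an interval. So it contributes up to two changes of the restricted bag per path. A characteristic therefore consists of $\Oh(\ell^2)$ segments, each carrying its own typical sequence with $2^{\Oh(k)}$ choices. That is where the exponent $k\ell^2$ comes from, not an interleaving count; the evaluation time is then polynomial in the number of characteristics.

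Your own model is internally inconsistent here. With an $\Oh(\ell)$-node trunk and typical sequences of length $\Oh(k)$, enumerating pairs and interleavings at a join costs only $2^{\Oh(k\ell)}$. The step to $2^{\Oh(k\ell^2)}$ via ``$2^{\Oh(\ell^2)}$-ish options raised to the $k$-dependent alphabet'' (with ``paths each of length $\Oh(\ell)$'', which contradicts your $\Oh(\ell)$-node trunk) is chosen to hit the target rather than derived. Your count of $2^{\Oh(k\ell)}$ characteristics is essentially the path-decomposition count and misses the tree shape.

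A minor point on the interface: the automaton does not get to renormalize its input decomposition into a nice one. Instead, each transition simulates, within a single step, the $\Oh(\ell)$ forget and introduce operations between the children's bags and the parent's bag, followed by one join. This costs only a factor polynomial in $\ell$.
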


With the automata $\Aa^\mx_t$, $\Aa^\mn_t$ and $\Bb_{k,\ell}$ in place, 
we may proceed to the proof of \cref{lem:dynamic-treewidth}.

\begin{proof}[Proof of \cref{lem:dynamic-treewidth}]
	%By using the same trick as in TODO \ama{not using it in our DS for now, don't think it's needed}, 
    We may assume that every query for the length of a longest/shortest path in $G$ concerns the same fixed pair of vertices $(u^\star,v^\star)$. To achieve this, we add two fresh vertices $u^\star,v^\star$ to the graph, which normally remain isolated during processing the updates. Upon receiving a query about the length of a longest/shortest $(u,v)$-path in $G$, we add edges $u^\star u$ and $vv^\star$, ask for the pair $(u^\star,v^\star)$ instead, and then remove again the edges $u^\star u$ and $vv^\star$. The answer to the query about $(u,v)$ is then the answer to the query about $(u^\star,v^\star)$ decremented by $2$.
	
	Now, we set up the dynamic treewidth data structure of Korhonen~\cite{Korhonen25}, for the width parameter $t$.
    % \mw{
    It processes each update in amortized time $2^{\Oh(t)}\log n$, times $\tau(\Oh(t))$ for maintaining a run of an automaton $\cal Q$ with evaluation time $\tau(\ell)$ on width-$\ell$ tree decomposition. %\mwr{it seems in \cite{Korhonen25} that the term $\Oh(\tau\cdot t)$ is added to the running time but I guess it should be multiplied \ms{I don't see where it's added? I think it's multiplied everywhere. Also why ``$\cdot t$'' within $\Oh$?} \mw{that was taken verbatim from \cite{Korhonen25}. but Tuukka just told me that his statement was imprecise and now it should be ok}}
    The data structure maintains a rooted tree decomposition $\Tt$ of $G$ of width $\Oh(t)$ together with the run of $\cal Q$ on $\Tt$.
    First, we enrich the data structure with the automaton $\Bb_{k,\ell}$ for $k = t$ and $\ell = \Oh(t)$ being the bound on the width of the maintained tree decomposition.
    This incurs a~factor of $2^{\Oh(t^3)}\log n$ per update.
    Whenever edge insertion is accepted, we check using \Cref{lem:automaton-bodlaender} if it makes the treewidth grow beyond $t$.
    If yes, we revert the insertion.
    % }
    
    Next, we enrich the data structure with the automata $\Aa^\mx_t$ and $\Aa^\mn_t$. Since the evaluation time of $\Aa^\mx_t$ and $\Aa^\mn_t$ is $2^{\Oh(t\log t)}$, this data structure handles edge insertions and deletions (and reports if an edge insertion would make the treewidth larger than $t$) in time $2^{\Oh(t^3)}\log n$ and maintains 
    %a rooted tree decomposition $\Tt$ of $G$ of width $\Oh(t)$ together with
    the runs of $\Aa^\mx_t$ and $\Aa^\mn_t$ on $\Tt$. At the cost of increasing the width by $2$, we may assume that $u^\star$ and $v^\star$ are at all times contained in every bag of $\Tt$, and thus $\Tt$ is a boundaried tree decomposition of $G$ with boundary $\{u^\star,v^\star\}$. Therefore, by the properties of $\Aa^\mx_t$ and $\Aa^\mn_t$ provided by \cref{lem:automaton-max}, from the states associated with the root of $\Tt$ we may compute the maximum and the minimum length of an $(u^\star,v^\star)$-path in $G$.
\end{proof}

\end{document}